\newcommand{\bega}{\begin{eqnarray}}
\newcommand{\ega}{\end{eqnarray}}
\newcommand{\bb}{\begin{equation}}
\newcommand{\ee}{\end{equation}}
\newtheorem{defn} {Definition}
\newtheorem{te}{Theorem}
\newtheorem{lema}{Lemma}
\newtheorem{cor}{Corollary}
\begin{document}
\title{On Trapping Sets and Guaranteed Error Correction Capability of LDPC Codes and GLDPC Codes}
\author{Shashi~Kiran~Chilappagari,~\IEEEmembership{Student~Member,~IEEE,}~Dung~Viet~Nguyen,~\IEEEmembership{Student~Member,~IEEE,}~Bane~Vasic,~\IEEEmembership{Senior Member,~IEEE,}~and~Michael~W.~Marcellin,~\IEEEmembership{Fellow,~IEEE}
\thanks{Manuscript received \today. This work is funded by NSF under Grant CCF-0634969, ECCS-0725405, ITR-0325979 and by the INSIC-EHDR program.}
\thanks{S. K. Chilappagari, D. V. Nguyen, B. Vasic and M. W. Marcellin are with the Department of Electrical and Computer Engineering, University of Arizona, Tucson, Arizona, 85721 USA. (emails: \{shashic, nguyendv, vasic, marcellin\}@ece.arizona.edu.}
\thanks{Parts of this work have been accepted for presentation at the International Symposium on Information Theory (ISIT'08) and the International Telemetering Conference (ITC'08).}
}
\markboth{Submitted to  IEEE Transactions on Information Theory, May 2008}%
{Submitted to IEEE Transactions on Information Theory, May 2008}
\maketitle
\vspace{-0.5in}
\begin{abstract}
The relation between the girth and the guaranteed error correction capability of $\gamma$-left regular LDPC codes when decoded using the bit flipping (serial and parallel) algorithms is investigated. A lower bound on the size of variable node sets which expand by a factor of at least $3 \gamma/4$ is found based on the Moore bound. An upper bound on the guaranteed error correction capability is established by studying the sizes of smallest possible trapping sets. The results are extended to generalized LDPC codes. It is shown that generalized LDPC codes can correct a linear fraction of errors under the parallel bit flipping algorithm when the underlying Tanner graph is a good expander. It is also shown that the bound cannot be improved when $\gamma$ is even by studying a class of trapping sets. A lower bound on the size of variable node sets which have the required expansion is established. 
\end{abstract}
\begin{center} \textbf{\small Index Terms}
\end{center}

{\small Low-density parity-check codes, bit flipping algorithms, trapping sets, error correction capability}

\section{Introduction}\label{section1}
Iterative algorithms for decoding low-density parity-check (LDPC) codes \cite{gallager} have been the focus of research over the past decade and most of their properties are well understood \cite{richardsonurbanke,richardsonurbankeshokrollahi}.  These algorithms operate by passing messages along the edges of a graphical representation of the code known as the Tanner graph, and are optimal when the underlying graph is a tree. Message passing decoders perform remarkably well which can be attributed to their ability to correct errors beyond the traditional bounded distance decoding capability. However, in contrast to bounded distance decoders (BDDs), the guaranteed error correction capability of iterative decoders is largely unknown.

The problem of recovering from a fixed number of erasures is solved for iterative decoding on the binary erasure channel (BEC). If the  size of the minimum stopping set in the Tanner graph of a code is at least $t+1$, then the decoder is guaranteed to recover from any $t$ erasures. Orlitsky \textit{et al.} \cite{orlitsky} studied the relation between stopping sets and girth and derived bounds on the smallest stopping set in any $d$-left regular Tanner graph with girth $g$.

An analogous result does not exist for decoding on other channels such as the binary symmetric channel (BSC) and the additive white Gaussian noise (AWGN) channel. In this paper, we present such a result for hard decision decoding algorithms. Gallager \cite{gallager} proposed two binary message passing algorithms, namely Gallager A and Gallager B, for decoding over the BSC. He showed that for the column-weight $\gamma \geq 3$ and $\rho >\gamma$, there exist $(n,\gamma,\rho)$ \footnote{Precise definitions will be given in Section \ref{section2} and we follow standard terminology from \cite{gallager} and \cite{tanner}} regular LDPC codes for which the bit error probability asymptotically tends to zero  whenever we operate below the threshold. The minimum distance was shown to increase linearly with the code length, but correction of a linear fraction of errors was not shown. Zyablov and Pinsker \cite{zyablov} analyzed LDPC codes under a simpler decoding algorithm known as the bit flipping algorithm, and showed that almost all the codes in the regular ensemble with $\gamma \geq 5$ can correct a constant fraction of worst case errors. Sipser and Spielman \cite{spielman} used expander graph arguments to analyze two bit flipping algorithms, serial and parallel. Specifically, they showed that these algorithms can correct a fraction of errors if the underlying Tanner graph is a good expander. Burshtein and Miller \cite{burshtein} applied expander based arguments to show that message passing algorithms can also correct a fixed fraction of worst case errors when the degree of each variable node is more than five. Feldman \textit{et al.} \cite{feldman} showed that the linear programming decoder \cite{feldman2} is also capable of correcting a fraction of errors. Recently, Burshtein \cite{burshteinisitpaper} showed that regular codes with variable nodes of degree four  are capable of correcting a linear number of errors under bit flipping algorithm. He also showed tremendous improvement in the fraction of correctable errors when the variable node degree  is at least five.

Tanner \cite{tanner} studied a class of codes constructed based on bipartite graphs and short error  correcting codes. Tanner's work is a generalization of the LDPC codes proposed by Gallager \cite{gallager} and hence these codes are referred to as generalized LDPC (GLDPC) codes. Tanner proposed code construction techniques, decoding algorithms and complexity and performance analysis to analyze these codes and derived bounds on the rate and minimum distance for these codes. Sipser and Spielman \cite{spielman} analyzed a special case of GLDPC codes (which they termed as expander codes) using expansion arguments and proposed explicit constructions of asymptotically good codes capable of correcting a fraction of errors. Zemor \cite{zemor} improved the fraction of correctable errors under a modified decoding algorithm. Barg and Zemor in \cite{barg} analyzed the error exponents of expander codes and showed that expander codes achieve capacity over the BSC. Janwa and Lal \cite{janwa} studied GLDPC codes in the most general setting by considering unbalanced bipartite graphs. Miladinovic and Fossorier \cite{fossorier} derived bounds on the guaranteed error correction capability of GLDPC codes for the special case of failures only decoding.   

The focus of this paper is to establish lower and upper bounds on the guaranteed error correction capability of LDPC codes and GLDPC codes as a function of their column-weight and girth. For the case of GLDPC codes, we also find the expansion required to guarantee correction of a fraction of errors under the parallel bit flipping algorithm, as a function of the error correction capability of the sub-code. Our approach can be summarized as follows: (a) to establish lower bounds, we determine the size of variable node sets in a left regular Tanner graph which are guaranteed to have the expansion required by bit flipping algorithms, based on the Moore bound \cite[p.180]{biggs}  and (b) to find upper bounds, we study the sizes of smallest possible trapping sets \cite{rich} in a left regular Tanner graph. 

It is well known that a random graph is a good expander with high probability \cite{spielman}. However, the fraction of nodes having the required expansion is very small and hence the code length to guarantee correction of a fixed number of errors must be large. Moreover, determining the expansion of a given graph is known to be NP hard \cite{alon}, and spectral gap methods cannot guarantee an expansion factor of more than $1/2$ \cite{spielman}. On the other hand, code parameters such as column weight and girth can be easily determined or are assumed to be known for the code under consideration. We prove that for a given column-weight, the error correction capability grows exponentially in girth. However, we note that since the girth grows logarithmically in the code length, this result does not show that the bit flipping algorithms can correct a linear fraction of errors.
    
To find an upper bound on the number of correctable errors, we study the size of sets of variable nodes which lead to decoding failures. A decoding failure is said to have occurred if the output of the decoder is not equal to the transmitted codeword \cite{rich}. The conditions that lead to decoding failures are well understood for a variety of decoding algorithms such as maximum likelihood decoding, bounded distance decoding and iterative decoding on the BEC. However, for iterative decoding on the BSC and AWGN channel, the understanding is far from complete. Two approaches have been taken in this direction, namely trapping sets \cite{rich} and pseudo-codewords \cite{koetter}. We adopt the trapping set approach in this paper to characterize decoding failures. Richardson \cite{rich} introduced the notion of trapping sets to estimate the error floor on the AWGN channel. In \cite{chilappagarione}, trapping sets were used to estimate the frame error rate of column-weigh-three LDPC codes.  In this paper, we define trapping sets with the help of fixed points for the bit flipping algorithms (both serial and parallel). We then find bounds on the size of trapping sets based on extremal graphs known as cage graphs \cite{cage}, thereby finding an upper bound on the guaranteed error correction capability. 
By saying that a code with column weight $\gamma$ and girth $2g'$ is not guaranteed to correct $k$ errors, we mean that there exists a code with column weight $\gamma$ and girth $2g'$ that fails to correct $k$ errors.

The rest of the paper is organized as follows. In Section \ref{section2}, we provide a brief introduction to LDPC codes, decoding algorithms and trapping sets \cite{rich}. In Section \ref{section3}, we prove our main theorem relating the column weight and girth to the size of variable node sets which expand by a factor of at least $3 \gamma/4$. We derive bounds on the size of trapping sets based on cage graphs in Section \ref{section4}. In Section \ref{section5}, we prove that the parallel bit flipping algorithm can correct a fraction of errors if the underlying Tanner graph is a good expander. We conclude with a few remarks in Section \ref{section6}. 

\section{Preliminaries}\label{section2}
In this section, we first establish the notation and then proceed to give a brief introduction to LDPC codes and hard decision decoding algorithms. We then give the relation between the error correction capability of the code and the expansion of the underlying Tanner graph. We finally describe trapping sets for the algorithms. 
\subsection{Graph Theory Notation}
We adopt the standard notation in graph theory (see \cite{bollobas} for example).
$G=(U,E)$ denotes a graph with set of nodes $U$ and set of edges $E$. When there is no ambiguity, we simply denote the graph by $G$. An edge $e$ is an unordered pair $(u_1,u_2)$ of nodes and is said to be incident on $u_1$ and $u_2$. Two nodes $u_1$ and $u_2$ are said to be adjacent (neighbors) if there is an edge $e=(u_1,u_2)$ incident on them. The order of the graph is $|U|$ and the size of the graph is $|E|$. The degree of $u$, $d(u)$, is the number of its neighbors. A node with degree one is called a leaf or a pendant node. A graph is $d$-regular if all the nodes have degree $d$. The average degree $\overline{d}$ of a graph is defined as $\overline{d}=2|E|/|U|$. The girth $g(G)$ of a graph $G$, is the length of smallest cycle in $G$. $H=(V \cup C,E')$ denotes a bipartite graph with two sets of nodes; variable (left) nodes $V$ and check (right) nodes $C$ and edge set $E'$. Nodes in $V$ have neighbors only in $C$ and vice versa. A bipartite graph is said to be $\gamma$-left regular if all variable nodes have degree $\gamma$, $\rho$-right regular if all check nodes have degree $\rho$ and $(\gamma,\rho)$ regular if all variable nodes have degree $\gamma$ and all check nodes have degree $\rho$. The girth of a bipartite graph is even. 
\subsection{LDPC Codes and Decoding Algorithms}
LDPC codes \cite{gallager} are a class of linear block codes which can be defined by sparse bipartite graphs \cite{shokrollahi}. Let $G$ be a bipartite graph with two sets of nodes: $n$ variable nodes and $m$ check nodes. This graph defines a linear block code $\mathcal{C}$ of length $n$ and dimension at least $n-m$ in the following way: The $n$ variable nodes are associated to the $n$ coordinates of codewords. A vector $\mathbf{v}=(v_1,v_2,\ldots,v_n)$ is a codeword if and only if for each check node, the modulo two sum  of its neighbors is zero.  Such a graphical representation of an LDPC code is called the Tanner graph \cite{tanner} of the code. The adjacency matrix of $G$ gives a parity check matrix of $\cal{C}$. An $(n,\gamma,\rho)$ regular LDPC code has a Tanner graph with $n$ variable nodes each of degree  $\gamma$ (column weight) and $n\gamma/ \rho$ check nodes each of degree  $\rho$ (row weight). This code has length $n$ and rate $r \geq 1-\gamma/\rho$ \cite{shokrollahi}.

We now describe a simple hard decision decoding algorithm known as the parallel bit flipping algorithm \cite{zyablov,spielman} to decode LDPC codes. As noted earlier, each check node imposes a constraint on the neighboring variable nodes. A constraint (check node) is said to be satisfied by a setting of variable nodes if the sum of the variable nodes in the constraint is even; otherwise the constraint is unsatisfied. 

{\bfseries Parallel Bit Flipping Algorithm}
\begin{itemize}
\item In parallel, flip each variable that is in more unsatisfied than satisfied constraints.
\item Repeat until no such variable remains. 
\end{itemize}
A serial version of the algorithm is also defined in \cite{spielman} and all the results in this paper hold for the serial bit flipping algorithm also. The bit flipping algorithms are iterative in nature but do not belong to the class of message passing algorithms (see \cite{burshtein} for an explanation).

\subsection{Expansion and Error Correction Capability}
Sipser and Spielman \cite{spielman} analyzed the performance of the  bit flipping algorithms using the expansion properties of the underlying Tanner graph of the code. We summarize the results from \cite{spielman} below for the sake of completeness. We start with the following definitions from \cite{spielman}.

\begin{defn}
Let $G=(U,E)$ with $|U|=n_1$. Then \textit{every set of at most $m_1$ nodes expands by a factor of $\delta$} if, for all sets $S \subset U$
\[
|S|\leq m_1 \Rightarrow |\{y: \exists x \in S \mbox{~such that~} (x,y) \in E \}| > \delta |S|.
\]
\end{defn}
We consider bipartite graphs and expansion of variable nodes only. 
\begin{defn}
A graph is a $(\gamma,\rho,\alpha,\delta)$ expander if it is a $(\gamma,\rho)$ regular bipartite graph in which every subset of at most $\alpha$ fraction of the variable nodes expands by a factor of at least $\delta$.
\end{defn}
The following theorem from \cite{spielman} relates the expansion and error correction capability of an $(n,\gamma,\rho)$ LDPC code with Tanner graph $G$ when decoded using the parallel bit flipping decoding algorithm.
\begin{te}\cite[Theorem 11]{spielman}
Let $G$ be a $(\gamma, \rho, \alpha, (3/4 +\epsilon)\gamma)$ expander over $n$ variable nodes, for any $\epsilon > 0$. Then, the simple parallel decoding algorithm will correct any $\alpha_0 < \alpha(1 + 4\epsilon)/2$ fraction of errors after $\log_{1-4\epsilon}(\alpha_0 n)$ decoding rounds. 
\end{te}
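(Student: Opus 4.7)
The plan is to prove a one-round contraction lemma: if the current error set $R$ satisfies $|R|\le\alpha n$, then after one round of parallel bit flipping the new error set $R'$ satisfies $|R'|\le(1-4\epsilon)|R|$, so that iterating for $\log_{1/(1-4\epsilon)}(\alpha_0 n)$ rounds drives $|R|$ below $1$. The hypothesis $\alpha_0<\alpha(1+4\epsilon)/2$ is calibrated to keep the relevant sets inside the expansion regime throughout the iteration.

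For the contraction lemma I would first apply the expansion assumption to $R$ itself. A standard double count of the $\gamma|R|$ edges leaving $R$, noting that every non-unique check neighbor of $R$ absorbs at least two such edges, combined with $|N(R)|>(3/4+\epsilon)\gamma|R|$, gives
\[
|U(R)|\;\ge\;2|N(R)|-\gamma|R|\;>\;\bigl(\tfrac{1}{2}+2\epsilon\bigr)\gamma|R|,
\]
where $U(R)\subseteq N(R)$ denotes the unique check neighbors. Each check in $U(R)$ has exactly one neighbor in $R$ and is therefore unsatisfied, so the edges from $R$ into unsatisfied checks number more than $(\tfrac{1}{2}+2\epsilon)\gamma|R|$. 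Splitting that count between the flipped members $F_{\text{good}}\subseteq R$ (each contributing at most $\gamma$) and the unflipped members of $R$ (each contributing at most $\gamma/2$, otherwise the parallel rule would flip them) immediately yields $|F_{\text{good}}|>4\epsilon|R|$.

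Next I would control the wrong flips $F_{\text{bad}}:=R'\setminus R$ by applying the expansion to the enlarged set $T:=R\cup F_{\text{bad}}$. The unique neighbors of $T$ split into those whose unique $T$-neighbor lies in $R$ (necessarily unsatisfied, since the check has exactly one neighbor in the error set) and those whose unique $T$-neighbor lies in $F_{\text{bad}}$ (necessarily satisfied, because $R\subseteq T$ implies such a check has no error neighbors). The parallel flipping rule bounds these two contributions by $\tfrac{\gamma}{2}(|R|+|F_{\text{good}}|)$ and $\tfrac{\gamma}{2}|F_{\text{bad}}|$ respectively. Comparing their sum with $|U(T)|>(\tfrac{1}{2}+2\epsilon)\gamma|T|$ and simplifying gives $|F_{\text{good}}|>4\epsilon(|R|+|F_{\text{bad}}|)$, from which
\[
|R'|\;=\;|R|-|F_{\text{good}}|+|F_{\text{bad}}|\;<\;(1-4\epsilon)\bigl(|R|+|F_{\text{bad}}|\bigr).
\]

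The main obstacle I expect, and the reason the hypothesis carries the factor $(1+4\epsilon)/2$, is turning this inequality into the clean contraction $|R'|\le(1-4\epsilon)|R|$ and propagating it across rounds while maintaining $|T_i|\le\alpha n$ so that the expansion remains applicable at every step. This calls for a bootstrap induction on $i$ that simultaneously controls $|R_i|$ and $|F_{\text{bad},i}|$ using the displayed inequality together with the a priori bound $\alpha_0<\alpha(1+4\epsilon)/2$; this bookkeeping is the delicate part, since a naive use of $|F_{\text{bad}}|\le|R'|$ alone is not sharp enough. Once the per-round contraction is established, $|R_k|\le(1-4\epsilon)^k\alpha_0 n<1$ for $k\ge\log_{1/(1-4\epsilon)}(\alpha_0 n)$, matching the round count claimed in the theorem.
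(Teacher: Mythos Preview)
The paper does not actually prove this theorem: it is quoted verbatim from Sipser and Spielman \cite[Theorem~11]{spielman} and immediately followed by remarks, with no argument supplied. So there is no ``paper's own proof'' to compare against here.

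That said, your outline \emph{is} the Sipser--Spielman argument, and the computations you give are correct. The unique-neighbor count $|U(R)|\ge 2|N(R)|-\gamma|R|>(1/2+2\epsilon)\gamma|R|$ is the standard double count; your split over $F_{\text{good}}$ and $R\setminus F_{\text{good}}$ to get $|F_{\text{good}}|>4\epsilon|R|$ is exactly right; and your bound $|U_R(T)|\le(\gamma/2)(|R|+|F_{\text{good}}|)$ together with $|U_{F_{\text{bad}}}(T)|\le(\gamma/2)|F_{\text{bad}}|$ does yield $|F_{\text{good}}|>4\epsilon(|R|+|F_{\text{bad}}|)$ and hence $|R'|<(1-4\epsilon)|T|$.

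On the bootstrap you flag as ``the delicate part'': the standard closure goes as follows. If $|R_i|\le\alpha_0 n$ but $|T_i|=|R_i|+|F_{\text{bad},i}|>\alpha n$, pick $C'\subseteq F_{\text{bad},i}$ with $|R_i\cup C'|=\alpha n$ and rerun your second expansion estimate with $C'$ in place of $F_{\text{bad},i}$ (every vertex of $C'$ still has fewer than $\gamma/2$ satisfied neighbors, so the same inequalities hold). This forces $|F_{\text{good},i}|>4\epsilon\,\alpha n$, hence $|R_i|>4\epsilon\,\alpha n$, and combined with $|R_i|\le\alpha_0 n<\alpha(1+4\epsilon)n/2$ one checks there is no room for $|R_i\cup C'|$ to reach $\alpha n$; thus $|T_i|\le\alpha n$ after all. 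With $|T_i|\le\alpha n$ secured, your inequality gives $|R_{i+1}|<(1-4\epsilon)|T_i|\le(1-4\epsilon)\alpha n<\alpha_0 n$, so the hypothesis propagates and the geometric decay $|R_{i+1}|<(1-4\epsilon)|T_i|$ yields the claimed $\log_{1/(1-4\epsilon)}(\alpha_0 n)$ round count. Your instinct that the factor $(1+4\epsilon)/2$ is precisely what makes this subset argument close is correct.
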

\textit{Notes:}
\begin{enumerate}
\item The serial bit flipping algorithm can also correct $\alpha_0 < \alpha/2$ fraction of errors if $G$ is a $(\gamma, \rho, \alpha, (3/4)\gamma)$ expander.
\item The results hold for any left regular code as expansion is needed for variable nodes only.
\end{enumerate}
From the above discussion, it is observed that finding the number of variable nodes which are guaranteed to expand by a factor of at least $3 \gamma/4$, gives a lower bound on the guaranteed error correction capability of LDPC codes.
\subsection{Decoding Failures and Trapping Sets}
We now characterize  failures of the iterative decoders using fixed points and trapping sets. Some of the following discussion appears in \cite{colwtthreepaper}, \cite{chilappagarione}, \cite{ucsdpaper} and we include it for sake of completeness. 

Consider an LDPC code of length $n$ and let $\mathbf{x}=(x_1 x_2 \ldots x_n)$ be the binary vector which is the input to the iterative decoder. Let $S(\mathbf{x})$ be the support of $\mathbf{x}$. The support of $\mathbf{x}$ is defined as the set of all positions $i$ where $x_i \neq 0$. The set of variable nodes (bits) which differ from their correct value are referred to as corrupt variables. 

\begin{defn}\cite{colwtthreepaper}
A decoder failure is said to have occurred if the output of the decoder is not equal to the transmitted codeword.
\end{defn} 

\begin{defn}
$\mathbf{x}$ is a fixed point of the bit flipping algorithm if the set of corrupt variables remains unchanged after one round of decoding.
\end{defn}

\begin{defn}\cite{chilappagarione}
The support of a fixed point is known as a trapping set. A $(V,C)$ trapping set $\cal{T}$ is a set of $V$ variable nodes whose induced subgraph has $C$ odd degree checks. 
\end{defn}

If the variable nodes corresponding to a trapping set are in error, then a decoder failure occurs. However, not all variable nodes corresponding to  a trapping set need to be in error for a decoder failure to occur.

\begin{defn}\cite{chilappagarione} The minimal number of variable  nodes that have to be initially in error for the decoder to end up in the trapping set $\cal{T}$ will be referred to as {\it critical number} $m$ for that trapping set.\end{defn}
\begin{defn} \cite{colwtthreepaper} A set of variable nodes which if in error lead to a decoding failure is known as a \textit{failure set}.\end{defn}

\section{Column Weight, Girth and Expansion}\label{section3}
In this section, we prove our main theorem which relates the column weight and girth of a code to its error correction capability. We show that the size of variable node sets which have the required expansion is related to the well known Moore bound \cite[p.180]{biggs}. We start with a few definitions required to establish the main theorem.
\subsection{Definitions}
\begin{defn}The \textit{reduced graph} $H_r=(V \cup C_r, E'_r)$ of $H=(V \cup C,E')$ is a graph with vertex set $V \cup C_r$ and edge set $E'_r$ given by 
\begin{eqnarray}
C_r &=& C \setminus C_p, ~C_p =\{c \in C : \mbox{c is a pendant node}\} \nonumber \\
E'_r&=& E' \setminus E'_p,~ E'_p = \{(v_i,c_j) \in E : c_j \in C_p\}. \nonumber 
\end{eqnarray}

\end{defn}
\begin{defn} Let $H=(V \cup C, E')$ be such that $\forall v \in V, d(v) \leq \gamma$. The \textit{$\gamma$ augmented graph} $H_{\gamma}=(V \cup C_{\gamma}, E'_{\gamma})$ is a graph with vertex set $V \cup C_{\gamma}$ and edge set $E'_{\gamma}$ given by
\begin{eqnarray}
C_{\gamma} &=& C \cup C_a, \mbox{~where~} C_a = \bigcup_{i=1}^{|V|}C_a^i \mbox{~and~} \nonumber \\
C_a^i &=& \{c_1^i,\ldots,c_{\gamma-d(v_i)}^i\}; \nonumber \\
E'_{\gamma}&=& E' \cup E'_{a}, \mbox{~where~} E'_a = \bigcup_{i=1}^{|V|} E_{a}^{'i} \mbox{~and} \nonumber \\
E_a^{'i} &=& \{(v_i,c_j)\in V \times C_{a}: c_j \in C_a^i\}. \nonumber
\end{eqnarray}
\end{defn}
  
\begin{defn}\cite[Definition 4]{spielman} The \textit{edge-vertex incidence graph} $G_{ev}=(U \cup E, E_{ev})$ of $G=(U,E)$ is the bipartite graph with vertex set $U \cup E$  and edge set
\[
E_{ev}=\{(e,u) \in E \times U : \mbox{$u$ is an endpoint of e}\}.
\]\end{defn}
\textit{Notes:}  
\begin{enumerate}
\item The edge-vertex incidence graph is right regular with degree two.
\item $|E_{ev}|=2|E|$.
\item $g(G_{ev})=2g(G)$.
\end{enumerate}

\begin{defn}An \textit{inverse edge-vertex incidence graph} $H_{iev}=(V, E'_{iev})$ of $H=(V \cup C, E')$ is a graph with vertex set $V$ and edge set $E'_{iev}$ which is obtained as follows. For $c \in C_r$, let $N(c)$ denote the set of neighbors of $c$. Label one node $v_i \in N(c)$ as a root node. Then 
\begin{eqnarray}
E'_{iev}&=&\{(v_i,v_j) \in V \times V: v_i \in N(c), v_j \in N(c),  \nonumber \\
& &i \neq j,\mbox { $v_i$ is a root node, for some $c \in C_r$} \}. \nonumber
\end{eqnarray}
\end{defn}
\textit{Notes:}  
\begin{enumerate}
\item Given a graph, the inverse edge-vertex incidence graph is not unique. 
\item $g(H_{iev}) \geq g(H)/2$, $|E'_{iev}| = |E'_r| - |C_r|$ and $|C_r| \leq |E'_r|/2$.
\item $|E'_{iev}| \geq |E'_r|/2$ with equality only if all checks in $C_r$ have degree two.
\item The term inverse edge-vertex incidence is used for the following reason. Suppose all checks in $H$ have degree two. Then the edge-vertex incidence graph of $H_{iev}$ is $H$.
\end{enumerate}

The \textit{Moore bound} \cite[p.180]{biggs} denoted by $n_0(d,g)$ is a lower bound on the least number of vertices in a $d$-regular graph with girth $g$. It is given by 
\begin{eqnarray}
n_0(d,g)=n_0(d,2r+1) &=& 1 + d \sum_{i=0}^{r-1} (d-1)^i, ~g~\mbox{odd} \nonumber\\
n_0(d,g)=n_0(d,2r)&=& 2 \sum_{i=0}^{r-1}(d-1)^i \nonumber, ~g~\mbox{even}.
\end{eqnarray}

In \cite{mooreirreg}, it was shown that a similar bound holds for irregular graphs. 
\begin{te}\cite{mooreirreg}
The number of nodes $n(\overline{d},g)$ in a graph of girth $g$ and average degree at least $\overline{d} \geq 2$ satisfies
\[
n(\overline{d},g) \geq n_0(\overline{d},g).
\]
\end{te}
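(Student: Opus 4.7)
The plan is to adapt the classical BFS-tree proof of the regular Moore bound to the irregular setting via an averaging argument over all vertices (odd girth) or edges (even girth), closed by a convexity estimate against the degree sequence. Suppose first that $g=2r+1$. For any vertex $v$ let $N_i(v)$ denote the vertices at graph-distance exactly $i$ from $v$, and set $B_r(v)=\bigcup_{i=0}^{r} N_i(v)$. The girth hypothesis forces the subgraph induced on $B_r(v)$ to be a tree rooted at $v$: there are no cycles of length at most $2r$, so no horizontal or skip-level edges within the $r$-ball. In the regular case this immediately gives $|B_r(v)|\ge n_0(d,g)$ for every $v$; in the irregular case this step can fail when $v$ has small degree, so I would instead work with the average
\[
A \;=\; \frac{1}{n}\sum_{v\in U} |B_r(v)|.
\]
Clearly $A\le n$ because $|B_r(v)|\le n$ for every $v$, so the entire task reduces to showing $A\ge n_0(\bar d,g)$.

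To lower-bound $A$, observe that $\sum_v |N_i(v)|$ equals the number $W_i$ of non-backtracking walks of length $i$ in the graph, since any two distinct non-backtracking walks from a common start that ended at the same vertex would create a closed walk of length at most $2i\le 2r<g$. The low-order values are immediate: $W_0=n$, $W_1=n\bar d$, and $W_2=\sum_v d(v)(d(v)-1)\ge n\bar d(\bar d-1)$ by Jensen applied to the convex function $x(x-1)$ on $x\ge 1$ (here the hypothesis $\bar d\ge 2$ is used). Iterating the identity $W_{i+1}=\sum_{u}(d(u)-1)\cdot(\#\text{NB walks of length }i\text{ ending at }u)$ and applying convexity level by level yields $W_i\ge n\bar d(\bar d-1)^{i-1}$ for every $1\le i\le r$. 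Summing and dividing by $n$,
\[
A \;\ge\; 1 + \bar d\sum_{i=0}^{r-1}(\bar d-1)^{i} \;=\; n_0(\bar d,2r+1),
\]
which together with $A\le n$ proves the odd case.

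For even girth $g=2r$, the same plan applies with the starting object being a uniformly chosen edge $e=(u,w)$ instead of a single vertex. The girth assumption now gives $B_{r-1}(u)\cap B_{r-1}(w)=\emptyset$, because any common vertex would close a cycle of length at most $2r-1<g$, so $|B_{r-1}(u)|+|B_{r-1}(w)|\le n$. Running the same non-backtracking-walk / Jensen machinery from both endpoints and averaging produces the lower bound $2\sum_{i=0}^{r-1}(\bar d-1)^{i}=n_0(\bar d,2r)$, establishing the theorem in the even case as well.

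The step I expect to be the main obstacle is the recursive Jensen bound $W_{i+1}\ge n\bar d(\bar d-1)^{i}$. Unlike the base case, this is not a single Jensen application to a fixed convex function of the degree sequence, because $W_{i+1}$ mixes degrees of vertices appearing at different depths of the various BFS trees. The standard route is to represent $W_{i+1}$ as an $L^{1}$-norm of a vector obtained by iteratively applying the non-backtracking operator to the all-ones vector weighted by $d(v)-1$, and then to invoke the Cauchy--Schwarz or power-mean inequality together with the single linear constraint $\sum_v d(v)=n\bar d$. Verifying that the resulting inequality chains cleanly through all $r$ levels, and that no tightening is lost at the boundary level $i=r$ (where the tree-likeness of $B_r(v)$ is used but the depth cannot be pushed further without risking cycles), is the delicate part of the argument.
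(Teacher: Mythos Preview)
The paper does not give its own proof of this statement; it is quoted verbatim from Alon, Hoory, and Linial \cite{mooreirreg} and used as a black box. So there is no in-paper argument to compare your proposal against. Your outline is, in fact, the same skeleton as the original Alon--Hoory--Linial proof: average the BFS-ball sizes over vertices (odd girth) or edges (even girth), identify the level counts with numbers of non-backtracking walks, and replace degrees by the average degree via convexity.

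Two remarks on the proposal itself. First, in the even case your disjointness claim is false as written: for $r\ge 2$ both $u$ and $w$ lie in $B_{r-1}(u)\cap B_{r-1}(w)$. The correct objects are the \emph{directed} $(r{-}1)$-balls grown from $u$ away from $w$ and from $w$ away from $u$; those are disjoint because a common vertex would close a cycle of length at most $2r-1$ through the edge $uw$. This is a formulation slip, not a structural problem.

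Second, you are right that the step $W_{i}\ge n\bar d(\bar d-1)^{i-1}$ is where the real content lives, and your sketch does not close it. A naive ``Jensen level by level'' does not work because $W_{i+1}=\sum_{u}(d(u)-1)\,p_i(u)$ weights $d(u)-1$ by the non-uniform endpoint distribution $p_i$, and there is no a~priori reason $p_i$ favours high-degree vertices. Alon--Hoory--Linial resolve this with an inductive AM--GM (equivalently, log-convexity) argument on non-backtracking walks that tracks products of $(d(v)-1)$ along walks rather than bounding each level independently; your last paragraph gestures at this but stops short of the actual inequality. If you want a self-contained proof here you will need to reproduce that lemma, not just iterate Jensen.
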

Note that $\overline{d}$ need not be an integer in the above theorem.

\subsection{The Main Theorem}
We now state and prove the main theorem. 
\begin{te}\label{thm1}
Let $G$ be a $\gamma \geq 4$-left regular Tanner graph $G$ with $g(G)=2g'$. Then for all $k < n_0(\gamma/2,g')$, any set of $k$ variable nodes in $G$ expands by a factor of at least $3 \gamma/4$.\end{te}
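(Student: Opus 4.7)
The plan is to argue by contradiction. Suppose there exists a set $S$ of $k$ variable nodes with $|N(S)| \leq 3\gamma k/4$, where $k < n_0(\gamma/2, g')$. I would work with the bipartite subgraph $H'$ induced by $S \cup N(S)$; because $G$ is $\gamma$-left regular, $H'$ contains exactly $\gamma k$ edges.

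The next step is to squeeze a regularity-style count out of $H'$. Let $p$ be the number of pendant (degree-one) check nodes in $H'$ and let $N(S) = T_p \cup T_r$ with $|T_p| = p$. Passing to the reduced graph $H'_r$ kills the $p$ edges incident to $T_p$, so $|E'_r| = \gamma k - p$. Now apply the inverse edge-vertex incidence construction to obtain a graph $H'_{iev}$ on the vertex set $S$. Using the identity $|E'_{iev}| = |E'_r| - |C_r|$ from the notes, one finds
\begin{equation}
|E'_{iev}| \;=\; (\gamma k - p) - (|N(S)| - p) \;=\; \gamma k - |N(S)| \;\geq\; \gamma k - \tfrac{3\gamma k}{4} \;=\; \tfrac{\gamma k}{4}.
\end{equation}
Hence $H'_{iev}$ has at most $k$ vertices and at least $\gamma k/4$ edges, giving average degree at least $\gamma/2$. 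Deleting any isolated vertices preserves the girth and can only increase the average degree, so the resulting graph $\widetilde{H}$ on $n' \leq k$ vertices still has average degree at least $\gamma/2 \geq 2$ (here $\gamma \geq 4$ is used).

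Finally I would invoke the girth bound $g(H'_{iev}) \geq g(G)/2 = g'$, listed in the notes after the definition of $H_{iev}$, and the irregular Moore bound (Theorem cited above):
\begin{equation}
n' \;\geq\; n_0\!\left(\tfrac{\gamma}{2},\, g'\right).
\end{equation}
Since $n' \leq k$, this yields $k \geq n_0(\gamma/2, g')$, contradicting the hypothesis $k < n_0(\gamma/2, g')$. Therefore no such $S$ exists and every set of $k < n_0(\gamma/2, g')$ variable nodes must expand by a factor strictly greater than $3\gamma/4$.

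The step that requires the most care is the girth inequality $g(H'_{iev}) \geq g(G)/2$: every edge of $H'_{iev}$ comes from a length-two path in $H'$ through some check in $C_r$, so a cycle of length $\ell$ in $H'_{iev}$ lifts to a closed walk of length $2\ell$ in $H'$, which contains a cycle of length at least $g(G)$; this forces $2\ell \geq g(G)$. The only other subtle point is the handling of isolated vertices in $H'_{iev}$ so that the hypothesis $\bar{d} \geq 2$ of the irregular Moore bound is satisfied; removing them, as noted, only helps. Apart from these, the argument reduces to the edge-counting identities recorded in the definitions.
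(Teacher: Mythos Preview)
Your proof is correct and follows essentially the same route as the paper: both pass to the inverse edge--vertex incidence graph of the induced subgraph, use the identity $|E'_{iev}| = |E'_r| - |C_r|$ to convert the neighbor count into an edge count on a $k$-vertex graph of girth $\geq g'$, and then invoke the irregular Moore bound. The paper packages the last step through an auxiliary extremal function $f(k,g')$ and two lemmas, whereas you run the same computation directly as a contradiction; your identity $|E'_{iev}| = \gamma k - |N(S)|$ is exactly the content of the paper's Lemma~2 rearranged, so the arguments are the same in substance (and your version is slightly more streamlined, as it bypasses the paper's Lemma~1, which is not actually needed).
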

\begin{proof}
Let $G^{k}=(V^k \cup C^k, E^k )$ denote the subgraph induced by a set of $k$ variable nodes $V^{k}$. Since $G$ is $\gamma$-left regular, $|E^{k}|=\gamma k$. Let $G^{k}_r=(V^{k} \cup C^{k}_r ,E^{k}_r)$ be the reduced graph. We have
\begin{eqnarray}
|C^{k}| &=& |C^{k}_r| + |C^{k}_p| \nonumber \\
|E^k| &=& |E^k_p| + |E^k_r| \nonumber \\
|E^k_p| &=& |C^{k}_p| \nonumber \\ 
|C^{k}_p| &=& \gamma k - |E^{k}_r|. \nonumber 
\end{eqnarray}
We need to prove that $|C^k| > 3\gamma k/4$. 

Let $f(k,g')$ denote the maximum number of edges in an arbitrary graph of order $k$ and girth $g'$. By Theorem 2, for all $k < n_0 (\gamma/2,g')$, the average degree of a graph with $k$ nodes and girth $g'$ is less than $\gamma/2$. Hence, $f(k,g') < \gamma k/4$.  We now have the following lemma.

\begin{lema}
The number of edges in  $G^{k}_r$ cannot exceed $2f(k,g')$ i.e.,
\[
|E^{k}_r| \leq  2 f(k,g').
\]
\end{lema}
\begin{proof}
The proof is by contradiction. Assume that $|E^{k}_r| > 2f(k,g')$. Consider $G^{k}_{iev}=(V^{k}, E^{k}_{iev})$, an inverse edge vertex incidence graph of $G^{k}$. We have
\[
|E^{k}_{iev}| > f(k,g'). 
\]
This is a contradiction as $G^{k}_{eiv}$ is a graph of order $k$ and girth at least $g'$.
\end{proof}
We now find a lower bound on $|C^k|$  in terms of $f(k,g')$. We have the following lemma.
\begin{lema}
$|C^{k}| \geq \gamma k - f(k,g')$. 
\end{lema}
\begin{proof}
Let $|E^{k}_{r}| = 2f(k,g') - j$ for some integer $j \geq 0$. Then $|E^{k}_{p}| = \gamma k - 2f(k,g') + j$. We claim that  $|C^{k}_{r}| \geq f(k,g') + j$. To see this, we note that 
\begin{eqnarray}
|E^{k}_{iev}| &=& |E^{k}_{r}| - |C^{k}_{r}|, \mbox{~or} \nonumber \\
|C^{k}_{r}| &=& |E^{k}_{r}| - |E^{k}_{iev}|. \nonumber 
\end{eqnarray}
But
\begin{eqnarray}
|E^{k}_{iev}| &\leq& f(k,g') \nonumber \\
\Rightarrow |C^{k}_{r}| &\geq& 2f(k,g') - j - f(k,g') \nonumber \\
\Rightarrow |C^{k}_{r}| &\geq& f(k,g') - j .\nonumber
\end{eqnarray}
Hence we have,
\begin{eqnarray}
|C^{k}| &=&  |C^{k}_{r}| + |C^{k}_{p}| \nonumber \\
\Rightarrow |C^{k}| &\geq& f(k,g') - j + \gamma k - 2f(k,g') + j \nonumber \\
\Rightarrow |C^{k}| &\geq& \gamma k - f(k,g'). \nonumber
\end{eqnarray}
\end{proof}
The theorem now follows as
\[
f(k,g') < \gamma k/4 
\]
and therefore
\[
|C^{k}| > 3\gamma k/4.
\]
\end{proof}
\begin{cor}
Let $\mathcal{C}$ be an LDPC code with column-weight $\gamma \geq 4$ and girth $2g'$. Then the bit flipping algorithm can correct any error pattern of weight less than $n_0(\gamma/2,g')/2$.
\end{cor}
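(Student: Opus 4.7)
The plan is to combine Theorem~\ref{thm1} with the Sipser--Spielman expansion-based correction result quoted in Section \ref{section2}. Once Theorem~\ref{thm1} is in hand, the corollary reduces to a one-paragraph deduction.

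First, I would invoke Theorem~\ref{thm1} over the full admissible range: every set of $k$ variable nodes with $k \le n_0(\gamma/2, g') - 1$ expands by a factor strictly greater than $3\gamma/4$ in the Tanner graph $G$ of $\mathcal{C}$. Setting $\alpha n := n_0(\gamma/2, g') - 1$ therefore makes $G$ satisfy the expansion hypothesis of the Sipser--Spielman bit flipping theorem, either in its exact $(3/4)\gamma$ form (serial algorithm, per Note~1 after the theorem) or, after extracting a sufficiently small $\epsilon > 0$, in its $(3/4+\epsilon)\gamma$ form (parallel algorithm).

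Next, I would apply the bit flipping theorem to conclude that any error pattern of fractional weight $\alpha_0 < \alpha/2$ is corrected. Translating to absolute counts, this means any error pattern of weight strictly less than $\alpha n / 2 = (n_0(\gamma/2, g') - 1)/2$, and in particular of weight less than $n_0(\gamma/2, g')/2$, is corrected by the bit flipping algorithm, which is exactly the claim of the corollary.

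The only real obstacle is a mild mismatch in the quantifier on $\epsilon$: the quoted parallel-version theorem demands a uniform $\epsilon > 0$, while Theorem~\ref{thm1} only delivers the strict pointwise inequality $|C^k| > 3\gamma k/4$. Since expansion is needed only for the finitely many sizes $k \in \{1, \ldots, n_0(\gamma/2, g') - 1\}$ and each $|C^k|$ is an integer, the finite set of positive gaps $(|C^k| - 3\gamma k/4)/(\gamma k)$ has a positive minimum, which serves as a uniform $\epsilon$. A cleaner route, which avoids the $\epsilon$ bookkeeping entirely, is to invoke the serial version (Note~1), for which the exact $(3/4)\gamma$ expansion already suffices and which the paper has already declared to cover all of its results. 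Either way, the corollary follows with no further computation.
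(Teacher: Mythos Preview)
Your proposal is correct and is exactly the argument the paper has in mind: the corollary is stated immediately after Theorem~\ref{thm1} without proof, as the direct combination of that theorem with the Sipser--Spielman expansion result (and its serial-version Note) quoted in Section~\ref{section2}. One small slip worth fixing: the phrase ``and in particular'' in your last step points the wrong way, since correction for weights $<(n_0-1)/2$ does not formally imply correction for weights $<n_0/2$; but this is harmless once you take $\alpha n$ arbitrarily close to $n_0$ (Theorem~\ref{thm1} covers every integer $k<n_0$) or invoke the $(1+4\epsilon)$ slack in the parallel statement, either of which recovers the stated bound for integer error weights.
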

\section{Cage Graphs and Trapping Sets}\label{section4}
In this section, we first give necessary and sufficient conditions for a given set of variables to be a trapping set. We then proceed to define a class of interesting graphs known as cage graphs \cite{cage} and establish a relation between cage graphs and trapping sets. We then give an upper bound on the error correction capability based on the sizes of cage graphs. The proofs in this section are along the same lines as in Section \ref{section3}. Hence, we only give a sketch of the proofs.
\begin{te}\label{thm2}
Let $\mathcal{C}$ be an LDPC code with $\gamma$-left regular Tanner graph $G$. Let $\cal{T}$ be a set consisting of $V$ variable nodes with induced subgraph $\cal{I}$. Let the checks in $\cal{I}$ be partitioned into two disjoint subsets; $\cal{O}$ consisting of checks with odd degree and $\cal{E}$ consisting of checks with even degree. Then $\cal{T}$ is a trapping set for bit flipping algorithm iff : (a) Every variable node in $\cal{I}$ has at least $\left\lceil \gamma/2 \right\rceil$  neighbors in $\cal{E}$, and (b) No $\left\lfloor \gamma/2 \right\rfloor + 1$ checks of $\cal{O}$ share a  neighbor outside $\cal{I}$.
\end{te}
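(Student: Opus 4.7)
The plan is to unfold the definition of a trapping set and translate it into the stated combinatorial conditions. Without loss of generality, assume the all-zero codeword is transmitted, so the set of corrupt variables equals the support of the received vector, which we take to be $\mathcal{T}$. A check $c$ is unsatisfied iff it has an odd number of neighbors in $\mathcal{T}$; hence the checks of $\mathcal{I}$ that are unsatisfied are exactly those in $\mathcal{O}$, while those that are satisfied are exactly those in $\mathcal{E}$. Since $\mathcal{T}$ is a trapping set iff the received vector is a fixed point of the parallel bit-flipping decoder, and since one parallel round produces the new support $(\mathcal{T}\setminus F_{\mathcal{T}}) \cup F_{\overline{\mathcal{T}}}$, with $F_{\mathcal{T}} \subseteq \mathcal{T}$ and $F_{\overline{\mathcal{T}}} \cap \mathcal{T} = \emptyset$, being a fixed point is equivalent to $F_{\mathcal{T}} = F_{\overline{\mathcal{T}}} = \emptyset$, i.e.\ no variable flips. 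The argument therefore splits into two dual cases.

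For $v \in \mathcal{T}$, every one of its $\gamma$ neighbor checks is by construction incident to a node of $\mathcal{T}$ and so lies in $\mathcal{I}$; they are partitioned into $N(v) \cap \mathcal{E}$ (satisfied) and $N(v) \cap \mathcal{O}$ (unsatisfied). The flipping rule says $v$ flips iff more of its checks are unsatisfied than satisfied, so $v$ does \emph{not} flip iff $|N(v) \cap \mathcal{E}| \geq |N(v) \cap \mathcal{O}|$, which together with $|N(v) \cap \mathcal{E}| + |N(v) \cap \mathcal{O}| = \gamma$ and integrality is exactly $|N(v) \cap \mathcal{E}| \geq \lceil \gamma/2 \rceil$. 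Demanding this for every $v \in \mathcal{T}$ is condition~(a).

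For $v \notin \mathcal{T}$, the neighbor checks of $v$ outside $\mathcal{I}$ have no corrupt neighbors and are therefore satisfied, while those inside $\mathcal{I}$ are unsatisfied iff they belong to $\mathcal{O}$. Hence the set of unsatisfied neighbor checks of $v$ is precisely $N(v) \cap \mathcal{O}$, and $v$ does not flip iff $|N(v) \cap \mathcal{O}| \leq \lfloor \gamma/2 \rfloor$. Requiring this for every $v \notin \mathcal{T}$ is equivalent to saying that no $\lfloor \gamma/2 \rfloor + 1$ checks of $\mathcal{O}$ have a common neighbor outside $\mathcal{I}$, which is condition~(b). Combining the two cases gives the biconditional.

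The routine parts are the parity bookkeeping that turns ``strictly more unsatisfied than satisfied'' into the $\lceil \gamma/2 \rceil$ and $\lfloor \gamma/2 \rfloor$ thresholds on the two sides. The only subtle point I expect is justifying at the outset that a fixed point really forces $F_{\mathcal{T}}$ and $F_{\overline{\mathcal{T}}}$ to be individually empty, rather than just matched in size: this holds because any variable of $\mathcal{T}$ that flips leaves a hole in the support that no flip outside $\mathcal{T}$ can fill, so both flip sets must vanish. Once that is recognized, the theorem reduces cleanly to the two local checks (a) and (b).
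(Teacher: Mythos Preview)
Your proposal is correct and follows essentially the same approach as the paper: identify the unsatisfied checks with $\mathcal{O}$, observe that a fixed point means no variable (inside or outside $\mathcal{T}$) flips, and translate the flipping rule into the thresholds in (a) and (b). Your write-up is in fact more careful than the paper's own proof, which handles necessity in a single sentence; your explicit splitting into the cases $v\in\mathcal{T}$ and $v\notin\mathcal{T}$, and your remark that the fixed-point condition forces both flip sets to vanish (not merely to be equinumerous), make the argument complete where the paper leaves it implicit.
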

\begin{proof}
We first show that the conditions stated are sufficient. Let $\mathbf{x_{\mathcal{T}}}$ be the input to the bit flipping algorithm, with support $\mathcal{T}$. The only unsatisfied constraints are in $\mathcal{O}$. By the conditions of the theorem, we observe that no variable node is involved in more unsatisfied constraints than satisfied constraints. Hence, no variable node is flipped and by definition  $\mathbf{x_{\mathcal{T}}}$ is a fixed point implying that $\mathcal{T}$ is a trapping set.

To see that the conditions are necessary, observe that for $\mathbf{x}_{\mathcal{T}}$ to be a trapping set, no variable node should be involved in more unsatisfied constraints than satisfied constraints. 
\end{proof}

\textit{Remark:} Theorem \ref{thm2} is a consequence of Fact 3 from \cite{rich}.

To determine whether a given set of variables is a trapping set, it is necessary to not only know the induced subgraph but also the neighbors of the odd degree checks. However, in order to establish general bounds on the sizes of trapping sets given only the column weight and the girth, we consider only condition (a) of Theorem \ref{thm2} which is a necessary condition. A set of variable nodes satisfying condition (a) is known as a \textit{potential trapping set}. A trapping set is a potential trapping set that satisfies condition (b). Hence, a lower bound on the size of the potential trapping set is a lower bound on the size of a trapping set. It is worth noting that a potential trapping set can always be extended to a trapping set by successively adding a variable node till condition (b) is satisfied. 
\begin{defn}\cite{cage}
A $(d,g)$-\textit{cage graph}, $G(d,g)$, is a $d$-regular graph with girth $g$ having the minimum possible number of nodes.
\end{defn}
A lower bound, $n_l(d,g)$, on the number of nodes $n_c(d,g)$ in a $(d,g)$-cage graph is given by the Moore bound. An upper bound $n_u(d,g)$ on $n_c(d,g)$ (see \cite{cage} and references therein) is given by
\begin{eqnarray}
n_u(3,g)&=& \left\{\begin{array}{cl}\frac{4}{3} + \frac{29}{12}~2^{g-2} & \mbox{for g odd} \\
																	\frac{2}{3} + \frac{29}{12}~2^{g-2} & \mbox{for g even} \end{array} \right. \nonumber \\
n_u(d,g)&=& \left\{\begin{array}{cl} 2(d-1)^{g-2} & \mbox{for g odd} \\
																	 4(d-1)^{g-3}& \mbox{for g even} \end{array} \right. . \nonumber
\end{eqnarray}

\begin{te}\label{thm3}
Let $\mathcal{C}$ be an LDPC code with $\gamma$-left regular Tanner graph $G$ and girth $2g'$. Let $\mathcal{T}(\gamma,2g')$ denote the size of smallest possible potential trapping set of $\mathcal{C}$ for the bit flipping algorithm. Then,
\[
|\mathcal{T}(\gamma,2g')| = n_c(\left\lceil \gamma/2 \right\rceil,g').
\]
\end{te}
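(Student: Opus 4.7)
The plan is to mirror the proof strategy of Section~\ref{section3}: convert the bipartite sub-structure of even-degree induced checks around a potential trapping set into an auxiliary simple graph on the set's variable nodes, whose degree and girth can then be compared directly with the definition of a cage. I would prove the two inequalities comprising the equality separately, in each case exploiting condition (a) of Theorem~\ref{thm2}.

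For the upper bound $|\mathcal{T}(\gamma,2g')|\le n_c(\lceil\gamma/2\rceil,g')$, I would construct an LDPC code with column weight $\gamma$ and girth $2g'$ that realises a potential trapping set of size exactly $n_c(\lceil\gamma/2\rceil,g')$. Starting from any $(\lceil\gamma/2\rceil,g')$-cage graph $G_c$, I would promote each vertex of $G_c$ to a variable node, each edge of $G_c$ to a degree-two check node on the corresponding pair of variables, and then attach $\lfloor\gamma/2\rfloor$ pendant checks to every variable to bring its degree up to $\gamma$. Any cycle of the resulting Tanner graph alternates between variable and check nodes and cannot involve a pendant check, so it descends to a cycle of half its length in $G_c$; hence the girth is exactly $2g'$. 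The $n_c$ variables obtained from $V(G_c)$ form a potential trapping set, since each has exactly $\lceil\gamma/2\rceil$ degree-two (hence even) check neighbours in its induced subgraph while the pendant checks are odd, so condition (a) of Theorem~\ref{thm2} is satisfied.

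For the lower bound $|\mathcal{T}(\gamma,2g')|\ge n_c(\lceil\gamma/2\rceil,g')$, let $\mathcal{T}$ be any potential trapping set of size $V$ and let $H^{*}$ be the bipartite subgraph on $\mathcal{T}\cup\mathcal{E}$, in which every variable has degree at least $\lceil\gamma/2\rceil$ and every check has even degree at least two. For each check $c\in\mathcal{E}$ of degree $d_c$, I would fix a perfect matching of its $d_c$ variable neighbours (which exists because $d_c$ is even) and insert the resulting $d_c/2$ pairs as edges of an auxiliary simple graph $G^{*}$ on vertex set $\mathcal{T}$. The degree of each variable in $G^{*}$ equals its number of even-check neighbours in $H^{*}$, which is at least $\lceil\gamma/2\rceil$; reducing to the extremal configuration (each variable having exactly $\lceil\gamma/2\rceil$ even-check neighbours and each such check having degree exactly two) makes $G^{*}$ a $\lceil\gamma/2\rceil$-regular simple graph on $V$ vertices. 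Any cycle of length $\ell$ in $G^{*}$ lifts to a cycle of length $2\ell$ in $H^{*}$, which is a subgraph of the Tanner graph of girth $2g'$, so $\ell\ge g'$. The definition of a cage then yields $V\ge n_c(\lceil\gamma/2\rceil,g')$.

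The main obstacle is the girth step of the lower bound: a naive pairing at a check of high degree can create cycles of $G^{*}$ that correspond only to closed walks of $H^{*}$ rather than true cycles of the Tanner graph, in which case the argument collapses to the weaker Moore-bound statement $V\ge n_0(\lceil\gamma/2\rceil,g')$ instead of the claimed cage bound. I would handle this by adapting the inverse edge-vertex incidence construction of Section~\ref{section3}, whose defining property $g(H_{iev})\ge g(H)/2$ is precisely what is needed, and by first reducing the bipartite structure to the extremal regular configuration in which every even check has degree exactly two so that the matching is canonical and the lifting of cycles is immediate.
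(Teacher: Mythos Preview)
Your proposal follows essentially the same route as the paper. The upper-bound construction (edge--vertex incidence graph of a $(\lceil\gamma/2\rceil,g')$-cage, then $\gamma$-augmentation by pendant checks) is exactly what the paper does, and for the lower bound the paper likewise passes to an auxiliary graph on $\mathcal{T}$ via the inverse edge--vertex incidence construction and invokes cage minimality; your matching variant is a minor deviation that you yourself propose to replace by the inverse edge--vertex incidence graph, and the ``reduce to the extremal regular configuration'' step you flag as delicate is precisely the paper's ``removing edges (if necessary)'' step, handled at the same level of detail.
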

\begin{proof}
We first prove the following lemma and then exhibit a potential trapping set of size $n_c(\left\lceil \gamma/2 \right\rceil,g')$. 
\begin{lema}
$|\mathcal{T}(\gamma,2g')| \geq n_c(\left\lceil \gamma/2 \right\rceil,g')$.
\end{lema}
\begin{proof}
Let $\mathcal{T}_1$ be a trapping set with $|\mathcal{T}_1|< n_c(\left\lceil \gamma/2 \right\rceil,g')$ and let $G_1$ denote the induced subgraph of $\mathcal{T}_1$. We can construct a $(\left\lceil \gamma/2 \right\rceil,g'')$- cage graph $(g'' \geq g)$ with $|\mathcal{T}_1|< n_c(\left\lceil \gamma/2 \right\rceil,g')$ nodes by removing edges (if necessary) from the inverse edge-vertex of $G_1$ which is a contradiction.
\end{proof}
We now exhibit a potential trapping set of size $n_c(\left\lceil \gamma/2 \right\rceil,g')$. Let $G_{ev}(\left\lceil \gamma/2 \right\rceil,g')$ be the edge-vertex incidence graph of a $G(\left\lceil \gamma/2 \right\rceil,g')$. Note that $G_{ev}(\left\lceil \gamma/2 \right\rceil,g')$ is a left regular bipartite graph with $n_c(\left\lceil \gamma/2 \right\rceil,g')$ variable nodes of degree $\left\lceil \gamma/2 \right\rceil$ and all checks have degree two. Now consider $G_{ev,\gamma}(\left\lceil \gamma/2 \right\rceil,g')$, the $\gamma$ augmented graph of $G_{ev}(\left\lceil \gamma/2 \right\rceil,g')$. It can be seen that $G_{ev,\gamma}(\left\lceil \gamma/2 \right\rceil,g')$ is a potential trapping set.
\end{proof}
\begin{te}
There exists a code $\mathcal{C}$ with $\gamma$-left regular Tanner graph of girth $2g'$ which fails to correct $n_c(\left\lceil \gamma/2 \right\rceil,g')$ errors.
\end{te}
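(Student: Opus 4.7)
The plan is to take the graph $G_{ev,\gamma}(\lceil \gamma/2 \rceil, g')$ constructed explicitly in the proof of Theorem \ref{thm3} and interpret it directly as the Tanner graph of the desired code $\mathcal{C}$; its variable set will then be a trapping set, and all $n_c(\lceil \gamma/2 \rceil, g')$ variables being in error will yield a decoding failure. First I would verify that this is a legitimate $\gamma$-left regular Tanner graph of girth exactly $2g'$: left regularity is immediate from the definition of $\gamma$-augmentation, which brings every variable node to degree $\gamma$; the girth claim follows because $g(G_{ev}) = 2g(G) = 2g'$ by the note in Section \ref{section2}, and the pendant checks added by augmentation have degree one and therefore cannot lie on any cycle, so $g(G_{ev,\gamma}) = 2g'$.

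Next I would show that the set $\mathcal{T}$ consisting of all $n_c(\lceil \gamma/2 \rceil, g')$ variable nodes is a trapping set via Theorem \ref{thm2}. In the induced subgraph, each variable $v$ has exactly $\lceil \gamma/2 \rceil$ neighbors coming from the original checks of $G_{ev}$ (each of degree two in the induced subgraph, since both endpoints of the corresponding edge of $G(\lceil \gamma/2 \rceil, g')$ lie in $\mathcal{T}$) and $\lfloor \gamma/2 \rfloor$ neighbors that are pendant (degree-one) checks. The degree-two checks belong to $\mathcal{E}$ and the pendant checks to $\mathcal{O}$, so condition (a) holds: every variable has $\lceil \gamma/2 \rceil$ neighbors in $\mathcal{E}$. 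Condition (b) holds vacuously, because the odd-degree checks in $\mathcal{O}$ are pendant and thus have no neighbors outside $\mathcal{I}$ at all, let alone shared ones. Equivalently, feeding the all-errors input to the bit flipping algorithm, each variable sees $\lceil \gamma/2 \rceil$ satisfied constraints and $\lfloor \gamma/2 \rfloor$ unsatisfied constraints and is therefore not flipped, so $\mathcal{T}$ is a fixed point and the decoder fails.

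The only subtlety is whether $G_{ev,\gamma}$ qualifies as a genuine Tanner graph; I expect this to be the main (mild) obstacle. Strictly speaking, any bipartite graph defines a linear code via its adjacency matrix as parity-check matrix, and degree-one checks are admissible (they merely pin a bit to zero), so no further work is required. If one prefers to avoid degree-one checks, the pendant checks can each be extended to degree two by attaching each to a distinct, fresh auxiliary variable node of degree $\gamma$ whose remaining incidences are routed to new unique checks, which neither creates short cycles (the added edges lie in trees rooted at $\mathcal{T}$) nor violates condition (b) of Theorem \ref{thm2} (the external neighbors so introduced are pairwise distinct). Either way, we obtain a $\gamma$-left regular Tanner graph of girth $2g'$ whose code fails to correct the error pattern of weight $n_c(\lceil \gamma/2 \rceil, g')$ supported on $\mathcal{T}$.
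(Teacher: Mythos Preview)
Your proposal is correct and follows essentially the same approach as the paper: start from $G_{ev,\gamma}(\lceil \gamma/2\rceil,g')$ and verify via Theorem~\ref{thm2} that its variable set is a trapping set in a $\gamma$-left regular Tanner graph of girth $2g'$. Your argument is in fact more explicit than the paper's, which merely asserts that one can ``construct a code \ldots\ such that condition~(b) is satisfied''; you observe that taking $G_{ev,\gamma}$ itself as the Tanner graph makes condition~(b) vacuous since the odd-degree checks are pendant and have no neighbors outside $\mathcal{I}$.
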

\begin{proof}
Let $G_{ev,\gamma}(\left\lceil \gamma/2 \right\rceil,g')$ be as defined in Theorem \ref{thm3}. Now construct a code $\mathcal{C}$ with column-weight $\gamma$ and girth $2g'$ starting from $G_{ev,\gamma}(\left\lceil \gamma/2 \right\rceil,g')$ such that the set of variable nodes in $G_{ev,\gamma}(\left\lceil \gamma/2 \right\rceil,g')$ also satisfies condition (b) of Theorem \ref{thm2}. Then, by Theorem \ref{thm2} and Theorem \ref{thm3}, the set of variable nodes in $G_{ev,\gamma}(\left\lceil \gamma/2 \right\rceil,g')$ with cardinality $n_c(\left\lceil \gamma/2 \right\rceil,g')$ is a trapping set and hence $\mathcal{C}$ fails to decode an error pattern of weight $n_c(\left\lceil \gamma/2 \right\rceil,g')$.
\end{proof}

\textit{Remark:} We note that for $\gamma=3$ and $\gamma=4$, the above bound is tight. Observe that for $d=2$, the Moore bound is $n_0(d,g)=g$ and that a cycle of length $2g$ with $g$ variable nodes is always a potential trapping set. In fact, for a code with $\gamma=3$ or $4$, and Tanner graph of girth greater than eight, a cycle of the smallest length is always a trapping set (see \cite{colwtthreepaper} for the proof).
\section{Generalized LDPC Codes}\label{section5}
In this section, we first consider two bit flipping decoding algorithms for GLDPC codes. We then establish a relation between expansion and error correction capability. We also establish a lower bound on the number of variable nodes that have the required expansion. We then exhibit a trapping set and as a consequence show that the bound on the required expansion cannot be improved when $\gamma$ is even. We also establish bounds on the size of trapping sets. 

We begin with the definition of GLDPC codes by adopting the terminology from expander codes \cite{spielman}.
\begin{defn}[Definition 6, \cite{spielman}]: Let $G$ be a $(\gamma,\rho)$ regular bipartite graph between $n$ variable nodes $(v_1,v_2,\ldots,v_n)$ and $n\gamma/\rho$ check nodes $(c_1,c_2,\ldots,c_{n\gamma/\rho})$. Let $b(i,j)$ be a function designed so that, for each check node $c_i$, the variables neighboring $c_i$ are $v_{b(i,1)}, v_{b(i,2)},\ldots,v_{b(i,\rho)}$. Let $\cal{S}$ be an error correcting code of block length $\rho$. The GLDPC code $\mathcal{C}(G,\mathcal{S})$ is the code of block length $n$ whose codewords are the words $(x_1,x_2,\ldots,x_n)$ such that, for $1 \leq i \leq n\gamma/\rho$, $(x_{b(i,1)},\ldots,x_{b(i,\rho)})$ is a codeword of $\mathcal{S}$.\end{defn} 

The terms column-weight, row-weight, check nodes, variable nodes and trapping sets mean the same as in case of LDPC codes. The code $\mathcal{S}$ at each check node is sometimes referred to as the sub-code.

\subsection{Decoding algorithms}
Tanner \cite{tanner} proposed different hard decision decoding algorithms to decode GLDPC codes. We now describe an iterative algorithm known as parallel bit flipping algorithm originally described in \cite{tanner}, which is employed when the sub-code is capable of correcting $t$ errors.  

\textbf{Parallel bit flipping algorithm:} 
Each decoding round consists of the following steps.
\begin{itemize}
\item A variable node sends its current estimate to check nodes. 
\item A check node performs decoding on incoming messages and finds the nearest codeword. For all variable nodes which differ from the codeword, the check node sends a flip message. If the check node does not find a unique codeword, it does not send any flip messages. 
\item A variable node flips if it receives more than $\gamma/2$ flip messages.
\end{itemize}

The set of variable nodes which differ from their correct value are known as corrupt variables. The rest of the variable nodes are referred to as correct variables. Following the algorithms, we have the following definition adopted from \cite{spielman}:
\begin{defn}
A check node is said to be \textit{confused} if it sends flip messages to correct variable nodes, or if it does not send flip message to corrupt variable nodes, or both. Otherwise, a check node is said to be \textit{helpful}.
\end{defn}
\textit{Remarks:} 
\begin{enumerate}
\item For the parallel bit flipping decoding algorithm, a check node with sub-code of minimum distance at least $d_{min}=2t+1$ can be confused only if it is connected to more than $t$ corrupt variable nodes.
\item The parallel bit flipping algorithm is different from the algorithm presented by Sipser and Spielman in \cite{spielman} for expander codes, but is similar to the algorithm proposed by Zemor in \cite{zemor}. However, we note that the codes considered in \cite{zemor} are based on $d$-regular bipartite graphs and are a special case of doubly generalized LDPC codes, where each variable node is also associated with an error correcting code.
\item Apart from helpful checks and confused checks, Sipser and Spielman defined unhelpful checks. However, our definition of confused checks includes unhelpful checks as well. 
\item Miladinovic and Fossorier in \cite{fossorier}  considered a decoding algorithm where the decoding at every check either results in correct decoding or a failure but not miscorrection. While this assumption is reasonable when the sub-code is a long code, it is not true in general. We however, point out that the methodology we adopt can be applied to this case as well.
\item The work by Sipser and Spielman \cite{spielman}, Zemor \cite{zemor}, Barg and Zemor \cite{barg} and Janwa and Lal \cite{janwa} focused on asymptotic results and explicit construction of expander codes. The proofs and constructions are based on spectral gap and as noted earlier, such methods cannot guarantee expansion factor of more than 1/2. Our proofs require a greater expansion factor. \end{enumerate}
\subsection{Expansion and Error Correction Capability}
We now prove that the above described algorithm can correct a fraction of errors if the underlying Tanner graph is a good expander.
\begin{te}\label{thm4}
Let $\mathcal{C}(G,\mathcal{S})$ be a GLDPC code with a $\gamma$-left regular Tanner graph $G$. Assume that the sub-code $\mathcal{S}$ has minimum distance at least $d_{min}=2t+1$ and is capable of correcting $t$ errors. Let $G$ be a $(\gamma,\rho,\alpha,\beta\gamma)$ expander where 
\begin{eqnarray}
1> \beta>\frac{t+2}{2(t+1)}. \nonumber
\end{eqnarray}
Then the parallel bit flipping decoding algorithm will correct any $\alpha_0 \leq \alpha$ fraction of errors.
\end{te}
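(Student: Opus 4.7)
\textit{Proof Proposal.}

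The plan is to follow the Sipser--Spielman expander argument, adapted to the sub-code setting, and show that under the hypothesis the set of corrupt variables contracts by a factor strictly less than $1$ after each round, so that the decoder reaches the transmitted codeword in $O(\log n)$ rounds. Let $\mathcal{V}$ denote the set of corrupt variables at the start of a round (initially $|\mathcal{V}|\le\alpha n$) and $\mathcal{V}'$ the corrupt set after one round, and write $\mathcal{V}'=X\cup Y$ disjointly, where $X\subseteq\mathcal{V}$ are corrupt variables that were not flipped and $Y$ are correct variables erroneously flipped to corrupt. By the first remark following the algorithm description, a check is confused only if it has more than $t$ neighbors in $\mathcal{V}$; let $B$ denote the set of such candidate-confused checks and $E_B$ the number of edges from $\mathcal{V}$ to $B$.

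First I would bound $|B|$ and $E_B$ by applying the expansion hypothesis to $\mathcal{V}$. Splitting $N(\mathcal{V})$ into checks with more than $t$ neighbors in $\mathcal{V}$ (counted by $k_1=|B|$) and those with between $1$ and $t$ neighbors in $\mathcal{V}$ (counted by $k_2$), double-counting edges gives $(t+1)k_1+k_2\le\gamma|\mathcal{V}|$, while expansion gives $k_1+k_2>\beta\gamma|\mathcal{V}|$. Subtracting yields $tk_1<\gamma|\mathcal{V}|(1-\beta)$, and hence
\[
E_B\;\le\;\gamma|\mathcal{V}|-k_2\;\le\;\gamma|\mathcal{V}|(1-\beta)+k_1\;\le\;\frac{\gamma|\mathcal{V}|(t+1)(1-\beta)}{t}.
\]

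Next I would analyze the action of the confused checks via the nearest-codeword property: each confused $c$ decodes to a sub-code codeword within Hamming distance $\deg_{\mathcal{V}}(c)$ of the current estimate, so the total number of flip messages $c$ emits is at most $\deg_{\mathcal{V}}(c)$, and summing gives at most $E_B$ total flips from $B$. On the variable side, every $v\in X$ must have at least $\gamma/2$ neighbors in $B$ that fail to flip it (all helpful neighbors flip corrupt variables, yet $v$ received at most $\gamma/2$ flips), and every $v\in Y$ must receive more than $\gamma/2$ flip messages from $B$ (helpful checks never flip correct variables). Charging each ``wrong'' edge at a confused check uniquely to its $v\in\mathcal{V}'$ endpoint and combining with the per-check flip bound, I expect to derive
\[
\tfrac{\gamma}{2}\,|\mathcal{V}'|\;\le\; E_B,\qquad\text{so}\qquad |\mathcal{V}'|\;\le\;\tfrac{2(t+1)(1-\beta)}{t}\,|\mathcal{V}|.
\]
The hypothesis $\beta>(t+2)/(2(t+1))$ is equivalent to $2(t+1)(1-\beta)/t<1$, yielding a strict contraction; induction preserves $|\mathcal{V}|\le\alpha n$ each round and drives $\mathcal{V}$ to $\emptyset$ in $O(\log n)$ rounds.

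The main obstacle I foresee is securing a single factor of $E_B$ rather than $2E_B$ in the penultimate display. A naive tally of non-flipping edges from $B$ to $X$ plus flipping edges from $B$ to $Y$ appears to yield only $\le 2E_B$, which would weaken the threshold. To remove this factor of two I plan to exploit the nearest-codeword identity more sharply: at a confused check $c$ with $j$ corrupt neighbors, the positions where the decoded codeword disagrees with the true codeword partition into ``non-flipped corrupt'' and ``flipped correct'' positions, and the decoder's Hamming-distance budget $d(\mathrm{decoded},\mathrm{estimate})\le j$ couples the sizes of these two sets, preventing both from being simultaneously large enough to double-count $E_B$.
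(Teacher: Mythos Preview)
Your bound on $|B|$ is identical to the paper's lemma, but the edge-counting route you take from there does not close, and the ``nearest-codeword coupling'' you propose at the end does not rescue it. At a confused check $c$ with $j_c$ corrupt neighbors, write $a_c$ for the number of corrupt neighbors it fails to flip and $b_c$ for the number of correct neighbors it flips. Since $d(\hat w_c,r_c)=(j_c-a_c)+b_c\le j_c$, you get $b_c\le a_c$; but this is \emph{equivalent} to your earlier observation that the total number of flip messages from $B$ is at most $E_B$, so it gives nothing new. You still only have $\tfrac{\gamma}{2}|X|\le\sum_c a_c\le E_B$ and $\tfrac{\gamma}{2}|Y|\le\sum_c b_c\le\sum_c a_c\le E_B$, and the contributions from $X$ (non-flip edges into $\mathcal V$) and from $Y$ (flip edges into $\mathcal V^c$) are genuinely disjoint pools that are each bounded by $E_B$, yielding $\tfrac{\gamma}{2}|\mathcal V'|\le 2E_B$. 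That only gives contraction for $\beta>\tfrac{3t+4}{4(t+1)}$, strictly weaker than the stated $\tfrac{t+2}{2(t+1)}$.

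The paper avoids edge-counting altogether. It bounds only $|C_k|=|B|<\tfrac{(1-\beta)\gamma|V|}{t}$ and then applies the expansion hypothesis a \emph{second} time, not to $V$ but to the post-round error set. Assuming $|F\cup C|\ge|V|$, choose $C'\subseteq C$ with $|F\cup C'|=|V|\le\alpha n$. Every $v\in F$ has at most $\lfloor\gamma/2\rfloor$ neighbors outside $C_k$ and every $v\in C'$ has at most $\lceil\gamma/2\rceil-1$; hence $|N(F\cup C')|<|C_k|+\tfrac{\gamma}{2}|V|<\bigl(\tfrac{1-\beta}{t}+\tfrac12\bigr)\gamma|V|$, which is $<\beta\gamma|V|$ exactly when $\beta>\tfrac{t+2}{2(t+1)}$, contradicting expansion. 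The crucial difference is that this is a \emph{neighbor} count: the confused checks are charged once in total (via $|C_k|$), not once per incident bad edge, so the factor of two never appears. Your approach would work if you replaced the $E_B$ bookkeeping with this second application of expansion to a size-$|V|$ subset of $F\cup C$.
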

\begin{proof} 
Let $n$ be the number of variable nodes in $\mathcal{C}$. Let $V$ be the set of corrupt variables at the beginning of a decoding round. Assume that $|V|\leq\alpha n$. We will show that after the decoding round, the number of corrupt variables is strictly less than $|V|$.

Let $F$ be the set of corrupt variables that fail to flip in one decoding round, and let $C$ be the set of variables that were originally uncorrupt, but which become corrupt after one decoding round. After one decoding round, the set of corrupt variables is $F\cup C$. In the worst case scenario, a confused check sends $t$ flip messages to the uncorrupt variables and no flip message to the corrupt variables. We now have the following lemma:
\begin{lema}
Let $C_k$ be the set of confused checks, then
\begin{eqnarray}
|C_k|<\frac{(1-\beta)\gamma|V|}{t}. \label{lm1}
\end{eqnarray}
\end{lema}
\begin{proof}
The total number of edges connected to the corrupt variables is $\gamma|V|$. Each confused check must have at least $t+1$ neighbors in $V$. Let S be the set of helpful checks that have at least one neighbor in $V$. Then,
\begin{eqnarray}
\gamma|V|\geq|C_k|(t+1)+|S|. \label{lm11}
\end{eqnarray}
By expansion,
\begin{eqnarray}
|S|+|C_k|>\beta\gamma|V|. \label{lm12}
\end{eqnarray}
By (\ref{lm11}) and (\ref{lm12}), we obtain
\begin{eqnarray}
|C_k|<\frac{(1-\beta)\gamma|V|}{t}. \nonumber
\end{eqnarray}
\end{proof}
We now prove that $|F\cup C|<|V|$. The proof is by contradiction. Assume that $|F\cup C|\geq|V|$. Then there exists a subset $C'\subset C$ such that $|F\cup C'|=|V|$. We observe that a variable node in $F$ can have at most $\lfloor\gamma/2\rfloor$ neighbors that are not in $C_k$. Also, a variable node in $C'$ must have at least $\lfloor\gamma/2\rfloor + 1$ neighbors in $C_k$, and hence  can have at most $\lceil\gamma/2\rceil-1$ neighbors that are not in $C_k$. Let $N(F\cup C')$ be the set of neighbors of $F\cup C'$. Then,
\begin{eqnarray}
N(F\cup C')&\leq&|C_k|+\lfloor\frac{\gamma}{2}\rfloor|F|+\left(\lceil\frac{\gamma}{2}\rceil-1\right)|C'| \nonumber\\
&<&|C_k|+\frac{\gamma}{2}|F|+\frac{\gamma}{2}|C'|=|C_k|+\frac{\gamma}{2}|V|.\label{th1}
\end{eqnarray}
Substituting (\ref{lm1}) into (\ref{th1}), we obtain
\begin{eqnarray}
N(F\cup C')<\left(\frac{1-\beta}{t}+\frac{1}{2}\right)\gamma|V|.\nonumber
\end{eqnarray}
Now
\begin{eqnarray}
&&\beta>\frac{t+2}{2(t+1)} \nonumber\\
&=>&\frac{1-\beta}{t}<\frac{2\beta-1}{2} \nonumber\\
&=>&\frac{1-\beta}{t}+\frac{1}{2}<\beta \nonumber\\
&=>&N(F\cup C')<\beta\gamma|V|\nonumber
\end{eqnarray}
which is a contradiction.
\end{proof}   

\textit{Remark:} The above theorem proves that the parallel bit flipping algorithm can correct a fraction of errors in linear number of rounds (in code length). However, if we assume an expansion of $(\beta+\epsilon)\gamma$, it can be shown that the number of errors decreases by a constant factor with every iteration resulting in convergence in logarithmic number of rounds.

The following theorem establishes a lower bound on the number of nodes in a left regular graph which expand by a factor required by the above algorithms.
\begin{te}\label{thm6}
Let $G$ be a $\gamma$-left regular bipartite graph with $g(G)=2g'$. Then for all $k < n_0(\gamma t/(t+1),g')$, any set of $k$ variable nodes in $G$ expands by a factor of at least $\beta \gamma$, where
\begin{eqnarray}
\beta = \frac{t+2}{2(t+1)}. \nonumber
\end{eqnarray}
\end{te}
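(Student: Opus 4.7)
The plan is to mimic the proof of Theorem \ref{thm1} almost verbatim, with the Moore--bound threshold $n_0(\gamma/2,g')$ replaced by $n_0(\gamma t/(t+1),g')$ and the target expansion $3\gamma/4$ replaced by $\beta\gamma = (t+2)\gamma/(2(t+1))$. Fix any set $V^k$ of $k<n_0(\gamma t/(t+1),g')$ variable nodes and let $G^k=(V^k\cup C^k,E^k)$ be the induced subgraph in $G$. Left regularity gives $|E^k|=\gamma k$. As in Theorem \ref{thm1}, partition $C^k$ into the pendant checks $C^k_p$ and the remaining (degree $\geq 2$) checks $C^k_r$, and decompose the edges analogously into $E^k_p$ and $E^k_r$. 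From $|E^k_p|=|C^k_p|$ and $|E^k|=\gamma k$ one obtains $|C^k_p|=\gamma k-|E^k_r|$.

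Next, let $f(k,g')$ denote the maximum number of edges in any graph on $k$ vertices of girth at least $g'$. I would apply Theorem 2 with threshold $\overline d = \gamma t/(t+1)$: for $k<n_0(\gamma t/(t+1),g')$, the average degree of any such graph is strictly less than $\gamma t/(t+1)$, and hence
\[
f(k,g') < \frac{\gamma t k}{2(t+1)}.
\]
This is the analog of the bound $f(k,g')<\gamma k/4$ used in the proof of Theorem \ref{thm1}.

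Then I would invoke an inverse edge-vertex incidence graph $G^k_{iev}$ of the reduced graph $G^k_r$. By the notes following the definition of $H_{iev}$, $G^k_{iev}$ has vertex set $V^k$, girth at least $g(G^k)/2\geq g'$, and $|E^k_{iev}| = |E^k_r|-|C^k_r|$. Since it is a graph on $k$ nodes with girth at least $g'$, one has $|E^k_{iev}|\leq f(k,g')$, giving $|C^k_r|\geq |E^k_r|-f(k,g')$. Combining this with $|C^k_p|=\gamma k-|E^k_r|$ produces
\[
|C^k| = |C^k_p|+|C^k_r| \geq \gamma k - f(k,g'),
\]
in complete parallel with Lemma 2 of Theorem \ref{thm1}. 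Plugging in the bound on $f(k,g')$ yields
\[
|C^k| > \gamma k - \frac{\gamma t k}{2(t+1)} = \frac{(t+2)\gamma k}{2(t+1)} = \beta\gamma k,
\]
which is exactly the required expansion.

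The argument is structurally identical to that of Theorem \ref{thm1}, so the only real ``obstacle'' is the book-keeping needed to pick the correct Moore-bound threshold. The identity $1-t/(2(t+1))=(t+2)/(2(t+1))=\beta$ is precisely what forces the choice $\overline d=\gamma t/(t+1)$; any larger threshold would fail to give the desired expansion factor, while any smaller one would be wasteful. No new graph-theoretic ingredient beyond the Moore bound for irregular graphs (Theorem 2) and the inverse edge-vertex incidence construction is required.
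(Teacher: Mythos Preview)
Your proposal is correct and follows exactly the same approach as the paper: the paper's proof simply says it is similar to Theorem~\ref{thm1}, records the bound $f(k,g') < k\gamma t/(2(t+1))$ obtained from the Moore threshold $n_0(\gamma t/(t+1),g')$, and then invokes $|C^k|\geq \gamma k - f(k,g')$ to conclude. You have supplied precisely these steps, with the inverse edge-vertex incidence argument spelled out in full.
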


\begin{proof}
The proof is similar to the proof of Theorem \ref{thm1}. Following the notation from Theorem \ref{thm1}, we note that for all $k < n_0(\gamma t/(t+1),g')$, 
\begin{eqnarray}
f(k,g') < \frac{k\gamma t}{2(t+1)}. \nonumber
\end{eqnarray}
Since $|C^k|\geq \gamma k-f(k,g')$, we have
\begin{eqnarray}
|C^k|>\frac{t+2}{2(t+1)}\gamma k. \nonumber
\end{eqnarray}
\end{proof}
Note that the above theorem holds when $\gamma t/(t+1) \geq 2$. 
\begin{cor}
Let $\mathcal{C}(G,\mathcal{S})$ be a GLDPC code with a $\gamma$-left regular Tanner graph $G$ and $g(G)=2g'$. Assume that the sub-code $\mathcal{S}$ has minimum distance at least $d_{min}=2t+1$ and is capable of correcting $t$ errors. Then the parallel bit flipping algorithm can correct any error pattern of weight less than $n_0(\gamma t/(t+1),g')$.
\end{cor}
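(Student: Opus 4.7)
The corollary is positioned as a direct packaging of the preceding two theorems. My plan is to choose the expander parameter $\alpha$ of Theorem \ref{thm4} so that $\alpha n$ is exactly the largest set size for which Theorem \ref{thm6} certifies the needed expansion, feed Theorem \ref{thm6} into Theorem \ref{thm4}, and read off the error correction guarantee.

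Concretely, I would let $N = n_0(\gamma t/(t+1),g') - 1$ and set $\alpha = N/n$. Theorem \ref{thm6} tells us that for every $k \leq N$, any set of $k$ variable nodes in $G$ has neighborhood of size strictly greater than $\beta\gamma k$, where $\beta = (t+2)/(2(t+1))$. Thus $G$ is a $(\gamma,\rho,\alpha,\beta'\gamma)$-expander for some $\beta'$ that I still need to produce as a single number. Theorem \ref{thm4} then gives error correction for any $\alpha_0 \leq \alpha$ fraction of errors, and since every error pattern of weight less than $n_0(\gamma t/(t+1),g')$ has weight at most $N = \alpha n$, the desired conclusion follows.

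The only nontrivial point, and what I expect to be the main friction in writing this cleanly, is upgrading the pointwise strict inequality supplied by Theorem \ref{thm6} to the uniform strict inequality $\beta' > (t+2)/(2(t+1))$ that Theorem \ref{thm4} demands as a hypothesis. I would handle this by observing that in the finite Tanner graph $G$ there are only finitely many subsets of $V$ of size at most $N$, so the ratio $|C^k|/(\gamma k)$ attains its minimum over these subsets; call this minimum $\beta'$. Since each such ratio strictly exceeds $\beta = (t+2)/(2(t+1))$ by Theorem \ref{thm6}, the minimum does too, giving $\beta' > (t+2)/(2(t+1))$, and certainly $\beta' < 1$ since neighborhoods cannot have more than $\gamma k$ nodes. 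With this $\beta'$, the hypothesis of Theorem \ref{thm4} is met, and the application is immediate. I would also note in passing the standing assumption $\gamma t/(t+1) \geq 2$, inherited from Theorem \ref{thm6}, so that the invoked Moore bound is meaningful.
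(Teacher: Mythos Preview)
Your approach is correct and is exactly what the paper intends: the corollary is stated there without proof, as the immediate combination of Theorems \ref{thm4} and \ref{thm6} that you describe, and your finiteness argument to pass from the pointwise strict inequality to a uniform one is the right way to make the packaging rigorous. One small polish: with $\beta'$ defined as the \emph{minimum} ratio $|C^k|/(\gamma k)$, the strict inequality $|N(S)|>\beta'\gamma|S|$ required by the expander definition fails at the minimizing set, and the ratio can equal $1$ (e.g.\ at $k=1$), so $\beta'<1$ is not automatic; instead choose any $\beta'$ strictly between $(t+2)/(2(t+1))$ and $\min\{r_{\min},1\}$, which exists since $r_{\min}>(t+2)/(2(t+1))$ and $(t+2)/(2(t+1))<1$.
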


\subsection{Trapping Sets of GLDPC Codes}
We now exhibit a trapping set for the parallel bit flipping algorithm. By examining the expansion of the trapping set, we show that the bound given in Theorem \ref{thm4} cannot be improved when $\gamma$ is even.

\begin{te}\label{thm7}
Let $\mathcal{C}$ be a GLDPC code with $\gamma$-left regular Tanner graph $G$. Let $\cal{T}$ be a set consisting of $V$ variable nodes with induced subgraph $\cal{I}$ with the following properties: (a) The degree of each check in $\cal{I}$ is either $1$ or $t+1$; (b) Each variable node in $V$ is connected to $\left\lceil \gamma/2 \right\rceil$ checks of degree $t+1$ and $\left\lfloor \gamma/2 \right\rfloor$ checks of degree $1$; and (c) No $\left\lfloor \gamma/2 \right\rfloor + 1$ checks of degree $t+1$ share a variable node outside $\cal{I}$. Then, $\cal{T}$ is a trapping set.
\end{te}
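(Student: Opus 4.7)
The plan is to exhibit $\mathbf{x}_{\mathcal{T}}$, the vector with support exactly $V$, as a fixed point of the parallel bit flipping algorithm; by the definition recalled in Section~\ref{section2}, this is what it means for $\mathcal{T}$ to be a trapping set. Throughout, assume without loss of generality that the all-zero codeword is transmitted, so the variables in $V$ are precisely the corrupt ones.

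I would organize the argument as two passes of bookkeeping. First, I would describe what flip messages each check node produces given the input $\mathbf{x}_{\mathcal{T}}$: checks outside $\mathcal{I}$ see the all-zero word and are silent; each degree-$1$ check in $\mathcal{I}$ sees a single error that lies within the sub-code's decoding radius $t$ and therefore decodes to the all-zero codeword, sending one flip message to its unique corrupt neighbor (which lies in $V$); each degree-$(t+1)$ check in $\mathcal{I}$ sees $t+1$ errors, which is outside the sub-code's decoding radius, so in the worst case for trapping-set analysis it either fails to find a unique codeword (and sends no messages) or miscorrects to a codeword that agrees with the received word on all of its corrupt neighbors---in either case any flip message it does send is directed at some variable outside $V$, not at a variable in $V$.

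Second, I would aggregate these messages at each variable and confirm that no variable receives more than $\gamma/2$ flip messages. For $v\in V$, property~(b) forces all $\gamma$ neighbors of $v$ to lie in $\mathcal{I}$, with exactly $\lfloor\gamma/2\rfloor$ of degree $1$ and $\lceil\gamma/2\rceil$ of degree $t+1$; the degree-$1$ checks generate $\lfloor\gamma/2\rfloor$ flip messages at $v$ while the degree-$(t+1)$ checks generate none by the check-side analysis, so $v$ does not flip. For $v'\notin V$, only degree-$(t+1)$ checks in $\mathcal{I}$ adjacent to $v'$ can possibly send it a flip message, and property~(c) caps their number at $\lfloor\gamma/2\rfloor$, so $v'$ also does not flip. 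Combining, $\mathbf{x}_{\mathcal{T}}$ is left unchanged after one parallel round, proving that $\mathcal{T}$ is a trapping set.

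The main obstacle is the treatment of the degree-$(t+1)$ checks. Because $t+1$ errors lie strictly outside the bounded-distance decoding ball, the sub-code decoder's response is not pinned down by the minimum distance alone, and one must appeal to the adversarial spirit of trapping sets, exhibiting a consistent decoder resolution under which no corrupt position receives a corrective flip message from a degree-$(t+1)$ check. Properties~(b) and (c) are precisely the structural conditions that make such a worst-case resolution coherent with the fixed-point requirement, and articulating this carefully---while noting that degree-$1$ checks always decode unambiguously to the all-zero codeword by virtue of $d_{min}\geq 2t+1$---is the delicate step of the proof.
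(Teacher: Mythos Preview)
Your proposal is correct and mirrors the paper's own proof: the paper likewise argues that every degree-$(t+1)$ check in $\mathcal{I}$ is confused and sends no flip message to $V$, so each $v\in V$ receives at most $\lfloor\gamma/2\rfloor$ flips (from its degree-$1$ neighbors) while each $v'\notin V$ receives at most $\lfloor\gamma/2\rfloor$ flips by property~(c). The paper is terser and simply asserts the confused behavior of the degree-$(t+1)$ checks without your hedging about an adversarial decoder resolution; the ``obstacle'' you flag is present in the original argument as well and is not elaborated there.
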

\begin{proof}
Observe that all the checks of degree $t+1$ in $\cal{I}$ are confused. Further, each confused check does not send flip messages to variable nodes in $V$. Since any variable node in $V$ is connected to $\left\lceil \gamma/2 \right\rceil$ confused checks, it remains corrupt. Also, no variable node outside $\cal{I}$ can receive more than $\left\lfloor \gamma/2 \right\rfloor$ flip messages. Hence, no variable node which is originally correct can get corrupted. By definition, $\cal{T}$ is a trapping set. 

It can be seen that the total number of checks in $\cal{I}$ is equal to $|V|(\left\lfloor \gamma/2 \right\rfloor + \left\lceil \gamma/2 \right\rceil/(t+1))$. Hence, the set of variable nodes $V$ expands by a factor of $\gamma(t+2)/(2(t+1))$ when $\gamma$ is even. Hence, the bound given in Theorem \ref{thm4} cannot be improved in this case.  
\end{proof}

For a set of variable nodes to be a trapping set, it is necessary that every variable node in the set is connected to at least $\left\lceil \gamma/2 \right\rceil$ confused checks. This observation leads to the following bound on the size of trapping sets.
\begin{te}
Let $\mathcal{C}$ be a GLDPC code with $\gamma$-left regular Tanner graph $G$ and $g(G)=2g'$. Let $n_c(d_l,d_r,2g')$ denote the number of left vertices in a $(d_l,d_r)$ regular bipartite graph of girth $2g'$. Then the size of the smallest possible trapping set of $\cal{C}$ is $n_c(\left\lceil \gamma/2 \right\rceil, t+1 ,2g')$.
\end{te}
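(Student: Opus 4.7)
The plan is to mirror the two-part proof of Theorem~\ref{thm3}: first establish the lower bound $|\mathcal{T}| \geq n_c(\lceil \gamma/2 \rceil, t+1, 2g')$ for any trapping set $\mathcal{T}$ via a Moore-type pruning argument applied to the bipartite incidence between $\mathcal{T}$ and its confused checks, then establish the matching upper bound by exhibiting a trapping set derived from the $\gamma$-augmentation of a $(\lceil \gamma/2 \rceil, t+1)$-cage bipartite graph. The bipartite cage now plays the role that the unipartite cage $G(\lceil \gamma/2 \rceil, g')$ plays in the LDPC case, because a confused check for the parallel bit flipping algorithm must have at least $t+1$ corrupt neighbors (not merely $2$ as in an even-degree LDPC check).

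For the lower bound, let $\mathcal{T}$ be any trapping set. By the observation stated just before the theorem, each $v \in \mathcal{T}$ must be incident to at least $\lceil \gamma/2 \rceil$ confused checks, and each such confused check must have at least $t+1$ neighbors in $\mathcal{T}$. Form the bipartite graph $\mathcal{B}$ with $\mathcal{T}$ on the left, the confused checks on the right, and all edges between them. Then $\mathcal{B}$ has left degrees $\geq \lceil \gamma/2 \rceil$, right degrees $\geq t+1$, and girth $\geq 2g'$ since it is a subgraph of $G$. Proceeding as in Theorem~\ref{thm3}, selectively delete edges (which can only preserve or increase the girth) to extract a $(\lceil \gamma/2 \rceil, t+1)$-regular bipartite subgraph on the same left vertex set with girth $\geq 2g'$. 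By the defining minimality of $n_c$, this subgraph has at least $n_c(\lceil \gamma/2 \rceil, t+1, 2g')$ left vertices, and hence so does $\mathcal{T}$.

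For the upper bound, let $\mathcal{B}^*$ be a $(\lceil \gamma/2 \rceil, t+1)$-cage bipartite graph of girth $2g'$ with exactly $n_c(\lceil \gamma/2 \rceil, t+1, 2g')$ left vertices. Apply the $\gamma$-augmentation to $\mathcal{B}^*$, appending $\lfloor \gamma/2 \rfloor$ fresh pendant checks to every left vertex so that each left vertex has degree exactly $\gamma$. Embed this $\gamma$-left regular graph as an induced subgraph of the Tanner graph of a GLDPC code $\mathcal{C}$ of girth $2g'$, arranged so that the left vertices of $\mathcal{B}^*$ additionally satisfy condition (c) of Theorem~\ref{thm7}, i.e., no $\lfloor \gamma/2 \rfloor + 1$ of the degree-$(t+1)$ checks of $\mathcal{B}^*$ share a variable node outside the set. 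Theorem~\ref{thm7} then certifies this set as a trapping set of size $n_c(\lceil \gamma/2 \rceil, t+1, 2g')$.

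The main obstacle is the pruning step in the lower bound: extracting a $(\lceil \gamma/2 \rceil, t+1)$-biregular subgraph from a bipartite graph with only the corresponding minimum-degree guarantees is not always possible by pure edge deletion, because of the integer counting constraint $\lceil \gamma/2 \rceil \cdot |\mathcal{T}| = (t+1)\cdot|\text{confused checks}|$ that biregularity imposes. A cleaner substitute is to invoke a bipartite Moore-type bound for graphs with prescribed minimum left and right degrees (in the spirit of Theorem~2, which handles average degrees for unipartite graphs); this yields the same conclusion without insisting on exact biregularity. A secondary delicacy in the upper bound is the simultaneous realization of conditions (a), (b), and (c) of Theorem~\ref{thm7} inside a single $\gamma$-left regular Tanner graph of girth $2g'$, but this is routine once enough additional variable nodes are appended and follows the same embedding recipe used for the LDPC construction in Theorem~\ref{thm3}.
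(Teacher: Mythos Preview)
Your proposal is correct and takes essentially the same approach as the paper, whose proof is the single line ``Follows from Theorem~\ref{thm3} and Theorem~\ref{thm7}'': you mirror the two-part cage argument of Theorem~\ref{thm3} (lower bound via minimum-degree/girth extremality, upper bound via the $\gamma$-augmented cage) with the bipartite $(\lceil \gamma/2\rceil,t+1,2g')$ cage in place of the unipartite $(\lceil \gamma/2\rceil,g')$ cage, and invoke Theorem~\ref{thm7} for the construction. Your observation that the edge-deletion pruning to exact biregularity is not literally guaranteed is well taken, but note that the same informality is already present in the paper's Lemma under Theorem~\ref{thm3}, so your proposed Moore-type substitute is in fact a strengthening of, not a departure from, the paper's argument.
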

\begin{proof}
Follows from Theorem \ref{thm3} and Theorem \ref{thm7}
\end{proof}
\begin{cor}
Let $\mathcal{C}(G,\mathcal{S})$ be a GLDPC code with a $\gamma$-left regular Tanner graph $G$ and $g(G)=2g'$. Assume that the sub-code $\mathcal{S}$ has minimum distance at least $d_{min}=2t+1$ and is capable of correcting $t$ errors. Then the parallel bit flipping algorithm cannot be guaranteed to correct all error patterns of weight greater than or equal to $n_c(\left\lceil \gamma/2 \right\rceil, t+1 ,2g')$.
\end{cor}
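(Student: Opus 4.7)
The plan is to mirror the construction used in Section \ref{section4} (the theorem exhibiting a code whose bit-flipping decoder fails on $n_c(\lceil \gamma/2\rceil, g')$ errors), but now using the GLDPC-specific trapping set structure given by Theorem \ref{thm7}. In brief, I would construct a concrete GLDPC code of column-weight $\gamma$ and girth $2g'$ whose Tanner graph contains a trapping set of the exact size prescribed by the preceding theorem, and then argue that the all-errors-on-trapping-set pattern is a fixed point of the parallel bit flipping decoder.

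First I would build the induced subgraph $\mathcal{I}$ of the would-be trapping set. Take a $(\lceil \gamma/2 \rceil, t+1)$-regular bipartite graph $B$ of girth $2g'$ whose left side has cardinality $n_c(\lceil \gamma/2 \rceil, t+1, 2g')$; such a graph exists by the very definition of $n_c$. Then form the $\gamma$-augmented graph of $B$ (appending $\lfloor \gamma/2 \rfloor$ pendant checks to each variable node), which gives a bipartite graph $\mathcal{I}$ with girth $2g'$ in which each variable node has $\lceil \gamma/2 \rceil$ neighbors of degree $t+1$ and $\lfloor \gamma/2 \rfloor$ neighbors of degree $1$. By construction, $\mathcal{I}$ already satisfies properties (a) and (b) of Theorem \ref{thm7}.

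Next I would extend $\mathcal{I}$ to a full $\gamma$-left regular Tanner graph of a GLDPC code with sub-code $\mathcal{S}$, girth $2g'$, and row-weight $\rho$, by attaching fresh variable nodes to the pendant checks of $\mathcal{I}$ (and adding further checks and variables as needed to complete the graph) in such a way that (i) the global girth remains $2g'$ and (ii) no $\lfloor \gamma/2 \rfloor + 1$ of the degree $(t+1)$ checks of $\mathcal{I}$ share a common neighbor outside $\mathcal{I}$. Condition (ii) is exactly property (c) of Theorem \ref{thm7}. This is a standard padding argument: since the left side is free to grow, one can always attach sufficiently many ``fresh'' variables to the extensions so that the intersection pattern outside $\mathcal{I}$ is trivial, while girth preservation proceeds as in the LDPC construction of Section \ref{section4}.

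Once the code $\mathcal{C}$ is in hand, Theorem \ref{thm7} guarantees that the $n_c(\lceil \gamma/2 \rceil, t+1, 2g')$ variables of $\mathcal{I}$ form a trapping set for the parallel bit flipping algorithm. Injecting the error pattern supported exactly on these variables therefore yields a fixed point of the decoder, so $\mathcal{C}$ fails to correct this error pattern, proving the corollary. The main obstacle I anticipate is verifying condition (ii) together with girth preservation in the extension step; however, this parallels the analogous construction already carried out in Section \ref{section4} for ordinary LDPC codes, so only minor modifications should be required to accommodate the degree-$(t+1)$ checks arising from the sub-code.
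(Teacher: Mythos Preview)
Your proposal is correct and is precisely the construction the paper has in mind: the corollary is stated in the paper without any explicit proof, relying on the preceding theorem (whose proof the paper simply attributes to Theorems \ref{thm3} and \ref{thm7}), and your argument unpacks exactly that---build a minimal $(\lceil \gamma/2 \rceil, t+1)$-regular bipartite graph of girth $2g'$, $\gamma$-augment it, extend to a full code so that condition (c) of Theorem \ref{thm7} holds, and invoke Theorem \ref{thm7}. The only cosmetic difference is that in Section \ref{section4} the paper obtains the degree-$2$ checks via an edge-vertex incidence graph, whereas here you (appropriately) start directly from a bipartite cage since the checks must have degree $t+1$; this is the natural GLDPC analogue and is what the paper's one-line appeal to Theorems \ref{thm3} and \ref{thm7} is implicitly doing.
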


\section{Concluding Remarks}\label{section6}
We derived lower bounds on the guaranteed error correction capability of LDPC and GLDPC codes by finding bounds on the number of nodes that have the required expansion. The bounds depend on two important code parameters namely: column-weight and girth. Since the relations between rate, column-weight, girth and code length are well explored in the literature (see \cite{gallager, tanner} for example), bounds on the code length needed to achieve certain error correction capability can be derived for different column weights and sub-codes (for GLDPC codes). The bounds presented in the paper serve as  guidelines in choosing code parameters in practical scenarios. 

The lower bounds derived in this paper are weak. However, extremal graphs avoiding three, four and five cycles have been studied in great detail (see \cite{extremalone,extremaltwo}) and these results can be used to derive tighter bounds when the girth is eight, ten or twelve. Also, since an expansion factor of $3 \gamma/4$ is not necessary (see \cite[Theorem 24]{spielman}) for LDPC codes, it is possible that tighter lower bounds can be derived for some cases. The results can be extended to message passing algorithms as well. There is a considerable gap between the lower bounds and upper bounds on the error correction capability. Deriving lower bounds based on the sizes of trapping sets rather than expansion may possibly lead to bridging this gap. 

Our approach can be used to derive bounds on the guaranteed erasure recovery capability for iterative decoding on the BEC by finding the number of variable nodes which expand by a factor of $\gamma/2$. In \cite{orlitsky}, the bounds on the guaranteed erasure recovery capability were derived based on the size of the smallest stopping set. Both approaches give the same bounds, which also coincide with the bounds given by Tanner \cite{tanner} for the minimum distance. Results similar to the ones reported by Miladinovic and Fossorier \cite{fossorier} based on the size of generalized stopping sets can also be derived.

\end{document}